\newcommand{\np}{{\em NP}\xspace}
\newcommand{\nphard}{\np-hard\xspace} 
\DeclareMathOperator*{\Exp}{E}
\newcommand{\E}[2][{}]{\ensuremath{{\textstyle \Exp_{#1}}\left[#2\right]}}
\newtheorem{theorem}{Theorem}[section]
\newtheorem{lemma}[theorem]{Lemma}
\newtheorem{claim}[theorem]{Claim}
\newtheorem{remark}[theorem]{Remark}}
\def\blksquare{\rule{2mm}{2mm}}
\def\qedsymbol{\blksquare}
\newcommand{\bg}[1]{\medskip\noindent{\bf #1}}
\newcommand{\ed}{{\hfill\qedsymbol}\medskip}
\newenvironment{proof}{\bg{Proof : }}{\ed}
\newenvironment{proofof}[1]{\bg{Proof of #1 : }}{\ed}
\newenvironment{proofofnobox}[1]{\bg{Proof of #1 : }}{\medskip}
\newenvironment{labeledthm}[1]{\bg{#1 }\it}{\medskip}
\newcommand{\R}{\ensuremath{\mathbb R}}
\newcommand{\I}{\ensuremath{\mathcal I}}
\newcommand{\F}{\ensuremath{\mathcal F}}
\newcommand{\D}{\ensuremath{\mathcal D}}
\newcommand{\Pc}{\ensuremath{\mathcal P}}
\newcommand{\Qc}{\ensuremath{\mathcal Q}}
\newcommand{\Rc}{\ensuremath{\mathcal R}}
\newcommand{\OPT}{\ensuremath{\mathit{OPT}}}
\newcommand{\cost}{\ensuremath{\mathit{cost}}}
\newcommand{\frall}{\ensuremath{\text{ for all }}}
\newcommand{\sm}{\ensuremath{\setminus}}
\newcommand{\es}{\ensuremath{\emptyset}}
\newcommand{\ceil}[1]{\ensuremath{\left\lceil#1\right\rceil}}
\newcommand{\floor}[1]{\ensuremath{\left\lfloor#1\right\rfloor}}
\newcommand{\e}{\ensuremath{\epsilon}}
\newcommand{\gm}{\ensuremath{\gamma}}
\newcommand{\sse}{\subseteq}
\newcounter{one} \newcounter{two}
\newcommand{\unifl}{{\small \textsf{UniFL}}\xspace}
\newcommand{\ufl}{{\small \textsf{UFL}}\xspace}
\newcommand{\cfl}{{\small \textsf{CFL}}\xspace}
\newcommand{\lbfl}{{\small \textsf{LBFL}}\xspace}
\newcommand{\cdufl}{{\small \textsf{CDUFL}}\xspace}
\newcommand{\hF}{\ensuremath{\widehat F}}
\newcommand{\hC}{\ensuremath{\widehat C}}
\newcommand{\hS}{\ensuremath{\widehat S}}
\newcommand{\htf}{\ensuremath{\widehat f}}
\newcommand{\hc}{\ensuremath{\widehat c}}
\newcommand{\al}{\ensuremath{\alpha}}
\newcommand{\dt}{\ensuremath{\delta}}
\newcommand{\sg}{\ensuremath{\sigma}}
\newcommand{\Gm}{\ensuremath{\Gamma}}
\newcommand{\sol}{\ensuremath{\mathrm{sol}}}
\newcommand{\opt}{\ensuremath{\mathrm{opt}}}
\newcommand{\hcF}{\ensuremath{\widehat{\mathcal F}}}
\newcommand{\hcFc}{\ensuremath{{\textstyle \widehat{\mathcal F}^c}}}
\newcommand{\hcFu}{\ensuremath{{\textstyle \widehat{\mathcal F}^u}}}
\newcommand{\hcD}{\ensuremath{\widehat{\mathcal D}}}
\newcommand{\hcDS}{\ensuremath{{\textstyle \widehat{\mathcal D}_{\hS}}}}
\newcommand{\hcDsol}{\ensuremath{{\textstyle \widehat{\mathcal D}_{\sol}}}}
\newcommand{\Pcst}[1]{\ensuremath{\Pc^{\mathsf{st}}({#1})}}
\newcommand{\Pcend}[1]{\ensuremath{\Pc^{\mathsf{end}}({#1})}}
\newcommand{\head}{\ensuremath{\mathit{head}}}
\newcommand{\tail}{\ensuremath{\mathit{tail}}}
\newcommand{\add}{\ensuremath{\mathrm{add}}}
\newcommand{\drop}{\ensuremath{\mathrm{delete}}}
\newcommand{\swap}{\ensuremath{\mathrm{swap}}}
\newcommand{\capt}{\ensuremath{\mathrm{capt}}}
\newcommand{\shift}{\ensuremath{\mathit{shift}}}
\newcommand{\hh}{\ensuremath{h}}
\newcommand{\ch}{\ensuremath{c}}
\newcommand{\Kh}{\ensuremath{K}}
\newenvironment{labellist}[2][-2ex]
{\begin{list}{{#2}\arabic{enumi}.}{\usecounter{enumi} \addtolength{\leftmargin}{#1}}}
{\end{list}}
\title{Improved Approximation Guarantees for Lower-Bounded Facility Location \\
{\large (Extended Abstract)}}
\author{
         Sara Ahmadian\thanks{{\tt \{sahmadian,cswamy\}@math.uwaterloo.ca}. 
         Dept. of Combinatorics and Optimization, Univ. Waterloo, Waterloo, ON N2L 3G1.
         Supported in part by NSERC grant 327620-09. The second author is also supported
         by an Ontario Early Researcher Award.}  
\and
\addtocounter{footnote}{-1}
         Chaitanya Swamy\footnotemark
}
\date{}
\begin{document}

\maketitle
\def\thepage{}
\thispagestyle{empty}

\begin{abstract}
We consider the {\em lower-bounded facility location} (\lbfl) problem (also sometimes
called {\em load-balanced facility location}), which is a generalization of 
{\em uncapacitated facility location} (\ufl), where each open facility is required to
serve a certain {\em minimum} amount of demand. More formally, an instance $\I$ of \lbfl
is specified by a set $\F$ of facilities with facility-opening costs $\{f_i\}$, a set $\D$
of clients, and connection costs $\{c_{ij}\}$ specifying the cost of assigning a client
$j$ to a facility $i$, where the $c_{ij}$s form a metric. A feasible solution specifies a 
subset $F$ of facilities to open, and assigns each client $j$ to an open facility $i(j)\in F$ so
that each open facility serves {\em at least $M$ clients}, where $M$ is an input
parameter. The cost of such a solution is $\sum_{i\in F}f_i+\sum_j c_{i(j)j}$, and the
goal is to find a feasible solution of minimum cost.   

The current best approximation ratio for \lbfl is $448$~\cite{Svitkina08}.  
We substantially advance the state-of-the-art for \lbfl by devising an approximation
algorithm for \lbfl that achieves a significantly-improved approximation guarantee of
$82.6$. 

Our improvement comes from a variety of ideas in algorithm design and
analysis, which also yield new insights into \lbfl. 
Our chief algorithmic novelty is to present an improved method for solving a
more-structured \lbfl instance obtained from $\I$ via a bicriteria approximation algorithm
for \lbfl, wherein all clients are aggregated at a subset $\F'$ of facilities, each having at
least $\al M$ co-located clients (for some $\al\in[0,1]$).  
One of our key insights is that one can reduce the resulting \lbfl instance, denoted
$\I_2(\al)$, to a problem we introduce, called {\em capacity-discounted \ufl}
(\cdufl). \cdufl is a special case of capacitated facility location (\cfl) where
facilities are either uncapacitated, or have finite capacity and zero opening
costs. Circumventing the difficulty that \cdufl inherits the intractability of \cfl with
respect to LP-based approximation guarantees, we give a simple local-search algorithm for
\cdufl based on add, delete, and swap moves that achieves the same approximation ratio (of
$1+\sqrt{2}$) as the corresponding local-search algorithm for \ufl. In contrast, the
algorithm in~\cite{Svitkina08} proceeds by reducing $\I_2(\al)$ to \cfl, whose
current-best approximation ratio is worse than that of our local-search algorithm for
\cdufl, and this is one of the reasons behind our algorithm's improved approximation 
ratio.  

Another new ingredient of our \lbfl-algorithm and analysis
is a subtly different method for constructing a bicriteria solution for $\I$ (and hence, 
$\I_2(\al)$), combined with the more significant change that we now choose a {\em random} 
$\al$ from a suitable distribution. 
This leads to a surprising degree of improvement in the approximation factor, which 
is reminiscent of the mileage provided by random $\al$-points in scheduling problems. 
\end{abstract}  

\newpage
\pagenumbering{arabic}
\normalsize

\section{Introduction}
Facility location problems have been widely studied in the Operations Research 
community (see, e.g.,~\cite{MirchandaniF90}). 
In its simplest version, {\em uncapacitated facility location} (\ufl), we are given a
set of facilities 
with opening costs, and a set of clients, 
and we want to open some facilities 
and assign each client to an open facility so as to minimize the sum of the
facility-opening and client-assignment costs.   
This problem 
has a wide range of applications. For example, a company might want to open its 
warehouses at some locations so that its total cost of opening warehouses and 
servicing customers is minimized. 

We consider the {\em lower-bounded facility location} (\lbfl) problem, which is a
generalization of \ufl 
where each open facility
is required to serve a certain {\em minimum} amount of demand. More formally, an \lbfl
instance $\I$ is specified by a set $\F$ of facilities, and a set $\D$ of clients. Opening
facility $i$ incurs a {\em facility-opening cost} $f_i$, and assigning a client $j$ to a
facility $i$ incurs a {\em connection cost} $c_{ij}$. A feasible solution specifies a
subset $F\sse\F$ of facilities, and assigns each client $j$ to an open facility 
$i(j)\in F$ so that {\em each open facility serves at least $M$ clients}, where $M$ is an
input parameter. The cost of such a solution is the sum of the facility-opening and
connection costs, that is, $\sum_{i\in F}f_i+\sum_j c_{i(j)j}$, and the goal is to find a
feasible solution of minimum cost. 
As is standard in the study of facility location problems, we assume throughout that
$c_{ij}$s form a metric. We use the terms connection cost and assignment cost
interchangeably in the sequel.

\lbfl can be motivated from various perspectives.
This problem was introduced independently by Karger and
Minkoff~\cite{KargerM00}, and Guha, Meyerson, and Munagala (who called the problem  
{\em load-balanced facility location})~\cite{GuhaMM00} (see also~\cite{GuhaMM01b}), 
both of whom arrived at \lbfl as a means of solving their respective buy-at-bulk style
network design problems. 
\lbfl arises as a natural subroutine in such settings
because obtaining a near-optimal solution to the buy-at-bulk problem often entails
aggregating a certain minimum demand at certain hub locations, and then connecting
the hubs via links of lower per-unit-demand cost (and higher fixed cost).
\lbfl also finds direct applications in 
supply-chain logistics problems, where the lower-bound constraint can be used to model the
fact that it is not profitable or feasible to use services 
unless they satisfy a certain minimum demand. 
For example (as noted in~\cite{Svitkina08}), Lim, Wang, and Xu~\cite{LimWX06}, use \lbfl to
abstract a transportation problem faced by a company that has to determine the allocation
of cargo from customers to carriers, who then ship their cargo overseas. Here the lower
bound arises because each carrier, if used, is required (by regulation) to deliver a
minimum amount of cargo. 
Also, \lbfl is an interesting special case of {\em universal facility location}
(\unifl)~\cite{MahdianP03}---a generalization of \ufl where the facility cost 
depends on the number of clients served by it---with non-increasing facility-cost functions.    
\unifl with arbitrary non-increasing 
functions is not a well-understood problem, 
\nolinebreak \mbox
{and the study of \lbfl may provide useful insights here.}

Clearly, \lbfl with $M=1$ is simply \ufl, and hence, is \nphard; consequently, we are 
interested in designing approximation algorithms for \lbfl. 
The first constant-factor approximation algorithm for \lbfl was devised
by Svitkina~\cite{Svitkina08}, whose approximation ratio is $448$. 
Prior to this, the only known approximation guarantees were {\em bicriteria guarantees}.  
\cite{KargerM00} and~\cite{GuhaMM00} independently devised 
$(\rho,\al)$-approximation algorithms via a reduction to \ufl:  
these algorithms return a solution of cost at most $\rho$ times the optimum where each 
open facility serves at least $\al M$ clients
\nolinebreak \mbox
{($\al<1$, $\rho$ is a function of $\al$).} 

\vspace{-1ex}
\paragraph{Our results and techniques.}
We devise an approximation algorithm for \lbfl that achieves a substantially-improved
approximation guarantee of $82.6$ (Theorem~\ref{mainthm}), thus significantly advancing
the state-of-the-art for \lbfl. Our improvement comes from a combination of ideas in
algorithm design and analysis, and yields new insights about the approximability of
\lbfl. 
In order to describe the ideas underlying our improvement, we first briefly sketch
Svitkina's algorithm.  

Svitkina's algorithm begins by using the reduction in~\cite{KargerM00,GuhaMM00} to obtain
a bicriteria solution for $\I$, which is then used to convert $\I$
into an \lbfl instance $\I_2$ with facility-set $\F'\sse\F$ having the following
structure: 
(i) all clients are aggregated at $\F'$ 
with each facility $i\in\F'$ having $n_i\geq \al M$ co-located clients;  
(ii) all facilities in $\F'$ have zero opening costs; and 
(iii) near-optimal solutions to $\I_2$ translate to near-optimal solutions to $\I$ 
(and vice versa).
The goal now is to identify a subset of $\F'$ 
to close, such 
that transferring the clients aggregated at these closed facilities to the remaining
(open) facilities in $\F'$ ensures that each remaining facility serves at least $M$ demand
(and the cost incurred is ``small''). \cite{Svitkina08} shows that one can achieve this by
solving a suitable \cfl instance. Essentially the idea is that a facility $i$ that remains
open corresponds to a {\em demand point} in the \cfl instance that requires $M-n_i$ units
of demand, and a facility $i$ that is closed maps to a {\em supply point} in the \cfl
instance having $n_i$ units that can be supplied to demand points (i.e., open facilities).  
Of course, one does not know beforehand which facilities will be closed and which
will remain open; 
so to encode this correspondence in the \cfl instance, we create at every location 
$i\in\F'$, a supply point with (suitable opening cost and) capacity $M$, and
a demand point with demand $M-n_i$ if $n_i\leq M$ 
(so the supply point at $i$ has $n_i$ residual capacity after satisfying this demand). 
(Assume $n_i\leq M$ for simplicity; facilities with $n_i>M$ are treated differently.)   
Finally, \cite{Svitkina08} argues that a \cfl-solution (where a supply point may end up
sending {\em less} then $n_i$ supply to other demand points) can be mapped to a solution
to $\I_2$ without increasing the cost incurred by much; since \cfl admits an
$O(1)$-approximation algorithm, one obtains an $O(1)$-approximate solution to 
$\I_2$, and hence to the original \lbfl instance $\I$. 

Our algorithm also proceeds by (a) obtaining an 
\lbfl instance $\I_2$ satisfying properties (i)--(iii) mentioned above, (b) solving $\I_2$,
and (c) mapping the $\I_2$-solution to a solution to $\I$, but our implementation of steps 
(a) and (b) differs from that in Svitkina's algorithm.
These modified implementations, which are independent of each other and yield significant
improvements in the overall approximation ratio even when considered in isolation, result
in our much-improved approximation ratio.    
We detail how we perform step (a) later, and focus first on describing how we solve 
$\I_2$, which is our chief algorithmic contribution. 

Our key insight is that 
one can solve the \lbfl instance $\I_2$ by reducing it to a new problem we
introduce that we call {\em capacity-discounted} \ufl (\cdufl), which closely resembles
\ufl and admits an algorithm (that we devise) with a much better approximation ratio than
\cfl. 
A \cdufl-instance has the property that every
facility is either uncapacitated (i.e., has infinite capacity), or has finite capacity and
{\em zero} facility cost. 
The \cdufl instance we construct consists of the same supply and demand points as in the 
reduction of $\I_2$ to \cfl in~\cite{Svitkina08}, except that all supply  
points with non-zero opening cost are now uncapacitated. 
(An interesting consequence is that if all facilities in $\I_2$ have
$n_i\leq M$, the \cdufl instance is in fact a \ufl-instance!) 

We prove two crucial algorithmic results.
It is not hard to see that the ``standard'' integrality-gap example for the natural
LP-relaxation of \cfl can be cast as a \cdufl instance, thus showing that the natural
LP-relaxation for \cdufl has a large integrality gap (see Appendix~\ref{cduflgap}); in 
fact, we are not aware of any LP-relaxation for \cdufl with constant integrality gap. 
Circumventing this difficulty, we devise a local-search algorithm for \cdufl based on add, 
swap, and delete moves that achieves the {\em same performance guarantees} as the
corresponding local-search algorithm for \ufl~\cite{AryaGKMMP01} (see
Section~\ref{cdufl-ls}).   
The local-search algorithm yields significant dividends in the overall approximation ratio
because not only is its approximation ratio for \cdufl better than the state-of-the-art
for \cfl, but also because it yields separate (asymmetric) guarantees on the
facility-opening and assignment costs, which allows one to perform a tighter analysis. 
Second, we show that any near-optimal \cdufl-solution can be mapped to a near-optimal
solution to $\I_2$ (see Section~\ref{i2-soln}). 
As before, it could be that in the \cdufl-solution, a supply point $i$ (which corresponds
to facility $i$ being closed down) sends less than $n_i$ supply to other demand points, so
that closing down $i$ entails transferring its residual clients to open facilities. But
since some supply points are now uncapacitated, it could also be that $i$ sends more than
$n_i$ supply to other demand points. We argue that this artifact can also be handled
without increasing the solution cost by much, by opening the facilities in a
carefully-chosen subset of $\{i\}\cup\{\text{demand points satisfied by $i$}\}$ and
closing down the remaining facilities. 
For {\em every value of $\al$} (recall that the \lbfl instance $\I_2$ is specified in
terms of a parameter $\al$), the resulting approximation factor for $\I_2$
(Theorem~\ref{i2thm})   
is better than the guarantee obtained for $\I_2$ in Svitkina's algorithm; this in turn
translates (by choosing $\al$ suitably) to an improved solution to the original instance. 

We now discuss how we implement step (a), that is, how we obtain instance $\I_2$.
As in~\cite{Svitkina08}, we arrive at $\I_2$ by computing a bicriteria solution to \lbfl,  
but we obtain this bicriteria solution in a different fashion 
(see Section~\ref{bicriteria}).  
The reduction from \lbfl to \ufl in~\cite{KargerM00,GuhaMM00} proceeds by setting the opening
cost of facility $i$ to $f_i+\frac{2\al}{1-\al}\cdot\sum_{j\in\D(i)} c_{ij}$, where
$\D(i)$ is the set of $M$ clients closest to $i$, solving the resulting \ufl instance, and
postprocessing using (single-facility) delete moves if such a move improves the solution
cost. 
We modify this reduction subtly by creating a \ufl instance, where facility $i$'s
opening cost is instead set to $f_i+2\al M R_i(\al)$, where $R_i(\al)$ is the distance
between $i$ and the $\al M$-closest client to it.  
As in the case of the earlier reduction, we argue that each open facility $i$ in the
resulting solution (obtained by solving \ufl and postprocessing) serves at least $\al M$  
clients. 
The overall bound we obtain on the total cost now includes various $R_i(\al)$ terms. 
Instead of plugging in the (weak) bound $M R_i(\al)\leq\frac{\sum_{j\in\D(i)}c_{ij}}{1-\al}$  
(which would yield the same guarantee as that obtained via the earlier reduction), 
we are able to perform a tighter analysis by choosing $\al$ from a suitable
distribution and leveraging the fact that 
$M\int_0^1 R_i(\al)d\al=\sum_{j\in\D(i)}c_{ij}$. 
(This can easily be derandomized, since there are only $M$ combinatorially distinct
choices for $\al$.) 
These simple modifications (in algorithm-design {\em and} analysis) yield a surprising 
amount of improvement in the approximation factor, which is reminiscent of the mileage
provided by (random) $\al$-points for various scheduling problems (see,
e.g.,~\cite{Skutella06}) and \ufl~\cite{ShmoysTA97,Sviridenko02}.    
Also, we observe that one can obtain further improvements by using the local-search 
algorithm of~\cite{CharikarG99,AryaGKMMP01} to solve the above \ufl instance: 
this is because the resulting solution is then already postprocessed, which allows us to 
exploit the asymmetric bounds on the facility-opening and assignment costs provided by the 
local-search algorithm via scaling, and improve the approximation ratio.  

Finally, we remark that the study of \cdufl may provide useful and interesting insights
about \cfl. 
\cdufl is a special case of \cfl that despite its special structure 
inherits the intractability of \cfl with respect to LP-based approximation guarantees. If
one seeks to develop LP-based techniques and algorithms for \cfl (which has been a
long-standing and intriguing open question), then one needs to understand how one
can leverage LP-based techniques for \cdufl, and it is plausible that LP-based insights 
developed for \cdufl 
may yield similar insights for \cfl (and potentially LP-based approximation
guarantees for \cfl).   

\vspace{-1ex}
\paragraph{Related work.}
As mentioned earlier, \lbfl was independently introduced by~\cite{KargerM00}
and~\cite{GuhaMM00}, who 
used it as a subroutine to solve the ({\em rent-or-buy} and hence, the) {\em maybecast}
problem, and the {\em access network design} problem respectively. 
Their ideas, which lead to bicriteria guarantees for \lbfl, play a preprocessing role both  
in Svitkina's algorithm for \lbfl~\cite{Svitkina08} and (slightly indirectly) in our
algorithm. 

There is a large body of literature that deals with approximation algorithms for (metric)
\ufl, \cfl and its variants; see~\cite{Shmoys04} for a survey on \ufl. The first constant
approximation guarantee for \ufl was obtained by Shmoys, Tardos, and
Aardal~\cite{ShmoysTA97} via an LP-rounding algorithm, and the current state-of-the-art is
a 1.488-approximation algorithm due to Li~\cite{Li11}. 
Local-search techniques have also been utilized to obtain $O(1)$-approximation guarantees
for \ufl~\cite{KorupoluPR00,CharikarG99,AryaGKMMP01}. We apply some of the ideas
of~\cite{CharikarG99,AryaGKMMP01} in our algorithm.  
Starting with the work of Korupolu, Plaxton, and Rajaraman~\cite{KorupoluPR00}, various
local-search algorithms with constant approximation ratios have been devised for \cfl,
with the current-best approximation ratio being $5.83+\e$~\cite{ZhangCY03}. 
Local-search approaches are however not known to work for \lbfl; in
Appendix~\ref{app:LBFLlocgap}, we show that local search based on $\add$, $\drop$, and
$\swap$ moves yields poor approximation guarantees.
Universal facility location (\unifl), where the facility cost is a non-decreasing function
of the number of clients served by it, was 
introduced by~\cite{HajiaghayiMM03,MahdianP03},  
and~\cite{MahdianP03} gave a constant approximation algorithm for this. We are not aware
of any work on \unifl with arbitrary non-increasing functions (which generalizes
\lbfl). \cite{GuhaMM00b} give a 
constant approximation for the case where the cost-functions do not decrease too steeply
(the constant depends on the steepness); notice that \lbfl does not fall
\nolinebreak \mbox
{into this class so their results do not apply here.}

\section{Problem definition and notation}
Recall that we have a set $\F$ of facilities with facility-opening costs $\{f_i\}$,
a set $\D$ of clients, metric connection (or assignment) costs $\{c_{ij}\}$ specifying the
cost of assigning client $j$ to facility $i$, and a (integer) parameter $M$. Our objective
is to open a subset $F$ of facilities and assign each client $j$ to an open facility
$i(j)\in F$, so that  at least $M$ clients are assigned to each open facility, and the
total cost incurred, $\sum_{i\in F}f_i+\sum_j c_{i(j)j}$, is minimized. 
We use $\I$ to denote this \lbfl instance. 

Let $F^*$ and $C^*$ denote respectively the facility-opening and assignment cost of an
optimal solution to $\I$; we will often refer to this solution as ``the optimal solution''
in the sequel. We sometimes abuse notation and also use $F^*$ to denote
the set of open facilities in this optimal solution. 
Let $\OPT=F^*+C^*$ denote the total optimal cost. 
For a facility $i\in\F$, let $\D(i)$ be the set of $M$ clients closest to $i$,  
and $R_i(\al)$ denote the distance between $i$ and the $\ceil{\al M}$-closest client to
$i$; that is, if $\D(i)=\{j_1,\ldots,j_M\}$, where 
$c_{ij_1}\leq\ldots\leq c_{ij_M}$, then $R_i(\al)=c_{ij_{\ceil{\al M}}}$ 
(for $0<\al\leq 1$). Let $R^*(\al)=\sum_{i\in F^*}R_i(\al)$. 
Observe that each $R_i(\al)$ is an increasing function of $\al$, 
$M\int_0^1 R_i(\al)d\al=\sum_{j\in\D(i)}c_{ij}$, and
$R_i(\al)\leq\bigl(\sum_{j\in\D(i)}c_{ij})/(M-\ceil{\al M}+1)
\leq\frac{\sum_{j\in\D(i)}c_{ij}}{M(1-\al)}$. 
Hence, $R^*(\al)$ is an increasing function of $\al$,   
$M\int_0^1 R^*(\al)d\al\leq C^*$, and $R^*(\al)\leq\frac{C^*}{M(1-\al)}$.

\section{Our algorithm and the main theorem}  \label{bicriteria}
We now give a high-level description of our algorithm using certain building blocks
that are supplied in the subsequent sections.
Let $\I$ denote the \lbfl instance.

\begin{list}{(\arabic{enumi})}{\usecounter{enumi} \addtolength{\leftmargin}{-2ex} \topsep=0.5ex}
\item {\bf Obtaining a bicriteria solution.\ }
Construct a \ufl instance with the same set of facilities and clients, and the same
assignment costs as $\I$, where the opening cost of facility $i$ is set to 
$f_i+2\al MR_i(\al)$. 
Use the local-search algorithm for \ufl in~\cite{CharikarG99} or~\cite{AryaGKMMP01}
with scaling parameter $\gm>0$ to solve this \ufl instance. 
(We set $\al, \gm$ suitably to get the desired approximation; see Theorem~\ref{mainthm}.)
Let $\F'\sse\F$ be the set of facilities opened in the \ufl-solution.
Claim~\ref{bicrit-lb} and Lemma~\ref{bicrit-perf} 
\nolinebreak\mbox
{show that each $i\in F'$ serves at least $\al M$ clients.}

\item {\bf Transforming to a structured \lbfl instance.\ }
We use the bicriteria solution obtained above to transform $\I$ into another structured
\lbfl instance $\I_2$ as in~\cite{Svitkina08}.  
In the instance $\I_2$, we set the opening cost of each
$i\in\F'$ to zero, and we ``move'' to $i$ all the $n_i\geq\al M$ clients assigned to it,
that is, all these clients are now co-located at $i$. So $\I_2$ consists of only the
points in $\F'$ (which forms both the facility-set and client-set). We will sometimes use
the notation $\I_2(\al)$ to indicate explicitly that $\I_2$'s specification depends on the
parameter $\al$.

\item Solve $\I_2$ using the method described in Section~\ref{i2solve}. Obtain a solution
to $\I$ by opening the same facilities and making the same client assignments as in the
solution to $\I_2$. 
\end{list}

\vspace{-1ex}
\paragraph{Analysis.}
Our main theorem is as follows. 

\begin{theorem} \label{mainthm}
For any $\al\in(0.5,1]$ and $\gm>0$, the above algorithm returns a solution to $\I$ of
cost at most \vspace{-4pt} 
$$
F^*\bigl(1+\gm\hh(\al)\bigr)+
C^*\Bigl(2\hh(\al)-1+\tfrac{2}{\gm}\Bigr)+
2\gm\al M R^*(\al)\hh(\al)+2\al MR^*(\al) \\[-6pt] $$
where $\hh(\al)=1+\frac{4}{\al}+\frac{4\al}{2\al-1}+4\sqrt{\frac{6}{2\al-1}}$.
Thus, we can compute efficiently a solution to $\I$ of cost at most:
\begin{list}{(\roman{enumii})}{\usecounter{enumii} \topsep=0ex \itemsep=0ex
    \addtolength{\leftmargin}{-4ex}}
\item $92.84\cdot\OPT$, by setting $\al=0.75, \gm=3/\hh(\al)$;
\item $82.6\cdot\OPT$, by letting $\gm$ be a suitable (efficiently-computable) function of 
$\al$, and choosing $\al$ randomly from the interval $[0.67,1]$ 
according to the density function $p(x)=\frac{1}{\ln(1/0.67)x}$. 
\end{list}
\end{theorem}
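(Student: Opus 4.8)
The plan is to account separately for the two phases of the algorithm---constructing the bicriteria solution (steps (1)--(2)) and then solving the structured instance $\I_2(\al)$ (step (3))---combine the resulting bounds into the displayed estimate, and finally choose $\al,\gm$ for parts (i) and (ii). First I would analyze the bicriteria phase. Viewing the optimal \lbfl solution as a feasible solution to the modified \ufl instance, its facility cost is $\sum_{i\in F^*}\bigl(f_i+2\al MR_i(\al)\bigr)=F^*+2\al MR^*(\al)$ and its connection cost is $C^*$. Running the scaled local-search algorithm of~\cite{CharikarG99,AryaGKMMP01} with parameter $\gm$ therefore yields, by the asymmetric local-optimality bounds (Lemma~\ref{bicrit-perf}), a solution whose connection cost and modified facility cost are controlled by $\gm\bigl(F^*+2\al MR^*(\al)\bigr)+C^*$ and $\bigl(F^*+2\al MR^*(\al)\bigr)+\tfrac{2}{\gm}C^*$ respectively; Claim~\ref{bicrit-lb} simultaneously guarantees that every opened facility serves at least $\al M$ clients, so $\I_2(\al)$ is well-defined.

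Next I would invoke Theorem~\ref{i2thm}, which bounds the cost of the solution produced for $\I_2(\al)$ (and hence for $\I$) by the factor $\hh(\al)$ applied to the relevant cost terms of the bicriteria solution. Substituting the phase-1 bounds and collecting the $F^*$, $C^*$, and $R^*(\al)$ contributions yields
$$F^*\bigl(1+\gm\hh(\al)\bigr)+C^*\Bigl(2\hh(\al)-1+\tfrac{2}{\gm}\Bigr)+2\al MR^*(\al)\bigl(\gm\hh(\al)+1\bigr),$$
which rearranges to the displayed bound. The only genuinely delicate point here is tracking the modified opening-cost terms $2\al MR_i(\al)$ through both the local-search guarantee and the $\I_2$-reduction so that they emerge with exactly the stated coefficients.

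For part (i) I would set $\al=0.75$ and $\gm=3/\hh(\al)$, use the weak estimate $R^*(\al)\le\frac{C^*}{M(1-\al)}$ to replace $2\al MR^*(\al)$ by a constant multiple of $C^*$, evaluate $\hh(0.75)$, and check numerically that the resulting coefficient of $\OPT=F^*+C^*$ is at most $92.84$.

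The heart of the improvement, and the main obstacle, is part (ii). Here I would treat $\al$ as random with density $p(x)=\frac{1}{\ln(1/0.67)\,x}$ on $[0.67,1]$ and take the expectation of the displayed bound. The $F^*$ and $C^*$ terms contribute the expected values of their coefficients $1+\gm\hh(\al)$ and $2\hh(\al)-1+\tfrac{2}{\gm}$, which I would bound by integrating $\hh(\al)$ and $1/\gm(\al)$ against $p$. The crucial gain comes from the $R^*(\al)$ term: because $p(x)\propto 1/x$, the factor $\al$ in $2\al MR^*(\al)\bigl(\gm\hh(\al)+1\bigr)$ cancels the $1/\al$ in the density, so its expectation equals $\frac{1}{\ln(1/0.67)}\int_{0.67}^1 2\bigl(\gm(\al)\hh(\al)+1\bigr)MR^*(\al)\,d\al$; bounding the bracketed coefficient uniformly and using $M\int_0^1 R^*(\al)\,d\al\le C^*$ converts this into a constant multiple of $C^*$, much smaller than what $R^*(\al)\le\frac{C^*}{M(1-\al)}$ would give. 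I would then choose $\gm$ as a function of $\al$ to optimize the per-$\al$ trade-off between the facility- and connection-cost coefficients, and verify numerically that the total is at most $82.6\cdot\OPT$; derandomization is immediate since $R^*(\al)$, and hence the whole bound, takes only $M$ distinct values as $\al$ ranges over $(0,1]$. The main difficulty will be the calculus and numerics of optimizing $\gm(\al)$ and evaluating the weighted integrals of $\hh(\al)$ to obtain the claimed constant.
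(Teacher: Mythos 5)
Your overall route is the paper's own: the displayed bound is derived exactly as in the paper (Lemma~\ref{bicrit-perf} for the bicriteria phase, then Theorem~\ref{i2thm} combined with Lemma~\ref{i2mapping} and the inequality $2g(\al)+1\le\hh(\al)$, which your coefficients implicitly encode correctly), part (i) is the same computation, and your part (ii) contains the paper's central insight that the density $p(x)\propto 1/x$ cancels the factor $\al$, so that $\E{\al MR^*(\al)}\le C^*/\ln(1/0.67)$, together with the same derandomization remark.

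There is, however, a genuine quantitative gap in part (ii): ``bounding the bracketed coefficient uniformly'' in the cross term provably cannot reach $82.6$. If $\sup_\al\gm(\al)\hh(\al)=K$, then $\gm(\al)\le K/\hh(\al)$ pointwise, hence $\E{2/\gm}\ge 2\E{\hh(\al)}/K$; so with $\beta=0.67$, $\ln(1/\beta)\approx 0.4005$ and $\ch_2(\beta)=\E{\hh(\al)}\approx 23.91$, the $C^*$-coefficient produced by your treatment is at least $2\ch_2-1+2\ch_2/K+2(K+1)/\ln(1/\beta)$, whose minimum over $K$ (attained by $\gm=K/\hh(\al)$ with $K=\sqrt{\ch_2\ln(1/\beta)}$) equals $2\ch_2-1+4\sqrt{\ch_2/\ln(1/\beta)}+2/\ln(1/\beta)\approx 46.82+30.91+4.99\approx 82.7>82.6$. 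The paper closes this $\approx 0.13$ gap with a different pairing: it sets $\gm=K/\sqrt{\hh(\al)}$, so the cross-term coefficient is \emph{not} constant, and then bounds $\E{\al MR^*(\al)\sqrt{\hh(\al)}}$ via Chebyshev's integral inequality, exploiting that $R^*(\al)$ is non-decreasing while $\hh(\al)$ is non-increasing: $\int_\beta^1 R^*(x)\sqrt{\hh(x)}\,dx\le\frac{1}{1-\beta}\bigl(\int_\beta^1 R^*(x)dx\bigr)\bigl(\int_\beta^1\sqrt{\hh(x)}dx\bigr)$. This lets the cross term pay only the \emph{uniform} average $\ch_3(\beta)=\frac{1}{1-\beta}\int_\beta^1\hh(x)dx\approx 23.50$, rather than the sup $\hh(\beta)\approx 31.7$ or the $p$-weighted average $\ch_2\approx 23.91$, and the optimized coefficient becomes $2\ch_2-1+4\bigl(\ch_2\ch_3/\ln^2(1/\beta)\bigr)^{1/4}+2/\ln(1/\beta)\approx 82.59<82.6$. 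So the missing idea is the monotonicity (Chebyshev) argument together with the $\gm\propto 1/\sqrt{\hh(\al)}$ choice; without it, your analysis tops out around $82.7$ and proves a slightly weaker constant than the one stated.
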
 

The roadmap for proving Theorem~\ref{mainthm} is as follows. We first bound the cost of
the bicriteria solution obtained in step (1) in terms of $\OPT$
(Lemma~\ref{bicrit-perf}). This will allow us to bound the cost of an optimal solution to
$\I_2$, and argue that mapping an $\I_2$-solution to a solution to $\I$ does not increase the
cost by much (Lemma~\ref{i2mapping}). The only missing ingredient is a guarantee on the
cost of the solution to $\I_2$ found in step (3), which we supply in Theorem~\ref{i2thm},
whose proof appears in Section~\ref{i2solve}. 

The following claim follows from essentially the same arguments as 
in~\cite{KargerM00,GuhaMM00}.

\begin{claim} \label{bicrit-lb}
Let $S'$ be a {\em delete-optimal} solution to the above \ufl instance; that is, the
total \ufl-cost does not decrease by deleting any open facility of $S'$. Then, each
facility of $S'$ serves at least $\al M$ clients.
\end{claim}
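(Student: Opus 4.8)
The plan is to argue by contradiction, following the strategy of~\cite{KargerM00,GuhaMM00}: I would assume that some open facility of $S'$ serves fewer than $\al M$ clients and exhibit a single delete move that strictly decreases the \ufl-cost, contradicting delete-optimality. Throughout I take each client to be assigned to its nearest facility in $S'$, which is without loss of generality for a delete-optimal solution, since reassigning a client to a nearer open facility never increases the cost. So suppose $i\in S'$ serves a set $N_i$ of clients with $n_i:=\size{N_i}<\al M$; as $n_i$ is an integer this gives $n_i\le\ceil{\al M}-1$.

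The key geometric step is to locate a nearby open facility to which the clients of $N_i$ can be reassigned cheaply once $i$ is deleted. Recall that $R_i(\al)$ is the distance from $i$ to its $\ceil{\al M}$-closest client, so the $\ceil{\al M}$ clients nearest to $i$ all lie within distance $R_i(\al)$ of $i$. Since at most $n_i<\ceil{\al M}$ of these nearest clients can be served by $i$, at least one of them, say $j_0$, is assigned to some open facility $i'\ne i$. Because clients go to their nearest open facility, $c_{i'j_0}\le c_{ij_0}\le R_i(\al)$, and the triangle inequality gives $c_{ii'}\le c_{ij_0}+c_{j_0i'}\le 2R_i(\al)$. (In particular this shows $\size{S'}\ge 2$ automatically, so a target $i'$ always exists.)

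It then remains to do the cost accounting. Deleting $i$ and reassigning every $j\in N_i$ to $i'$ raises the assignment cost of each such $j$ by $c_{i'j}-c_{ij}\le c_{ii'}\le 2R_i(\al)$, for a total reassignment cost of at most $2n_iR_i(\al)$, while saving the opening cost $f_i+2\al MR_i(\al)$. Hence the net change in \ufl-cost is at most $2n_iR_i(\al)-\bigl(f_i+2\al MR_i(\al)\bigr)=-f_i-2(\al M-n_i)R_i(\al)\le 0$, using $n_i<\al M$ and $f_i,R_i(\al)\ge 0$. This is strictly negative whenever $f_i>0$ or $R_i(\al)>0$, giving the contradiction and hence $n_i\ge\al M$. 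The one point needing care—the main obstacle—is the degenerate case $f_i=0$ and $R_i(\al)=0$, where $i$ has zero opening cost and at least $\ceil{\al M}$ co-located clients and the delete move is exactly cost-neutral. I would dispose of this by breaking ties in the delete rule to prefer solutions with fewer open facilities (equivalently, performing a delete whenever it does not \emph{increase} the cost), or by simply deleting such an $i$ before running the analysis; either restores the strict inequality and the claim follows.
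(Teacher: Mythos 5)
Your proof is correct and is essentially the argument the paper relies on: the paper gives no explicit proof of this claim, deferring to the delete-move argument of~\cite{KargerM00,GuhaMM00}, which is exactly what you carry out, adapted to the modified opening costs---a client $j_0$ among the $\ceil{\al M}$ nearest to $i$ served by another facility $i'$ gives $c_{ii'}\leq 2R_i(\al)$, so the reassignment cost $2n_iR_i(\al)$ of deleting $i$ is beaten by the saved opening cost $f_i+2\al MR_i(\al)$ when $n_i<\al M$. Your attention to the degenerate case $f_i=R_i(\al)=0$, where the delete move is only cost-neutral, is a point the paper glosses over (as literally stated, the claim needs exactly the tie-breaking convention you propose, e.g., two co-located zero-cost facilities splitting $\ceil{\al M}$ co-located clients is delete-optimal under the strict reading), so this is a refinement of, not a deviation from, the intended argument.
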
 

The local-search algorithms for \ufl in~\cite{CharikarG99,AryaGKMMP01} have the same
performance guarantees and both include a delete-move as a local-search operation, so upon
termination, we obtain a delete-optimal solution.\footnote{A subtle point is that typically
local-search algorithms terminate only with an ``approximate'' local optimum. However, one 
can then execute all delete moves that improve the solution cost, and thereby obtain a
delete-optimal solution.} 
Observe that opening the same facilities and making the same client assignments as in the
optimal solution to $\I$ yields a solution $S$ to the \ufl instance constructed in step
(1) of the algorithm with facility cost $F^S\leq F^*+2\al M R^*(\al)$ and assignment cost
$C^S\leq C^*$.   
Combined with the analysis in~\cite{CharikarG99,AryaGKMMP01}, this yields the following. 
(For simplicity, we assume that all local-search algorithms return a local optimum;
standard arguments show that 
dropping this assumption increases the approximation by at most a $(1+\e)$ factor.)

\begin{lemma} \label{bicrit-perf}
For a given parameter $\gm>0$, executing the local-search algorithm
in~\cite{CharikarG99,AryaGKMMP01} on the above \ufl instance returns a solution with
facility cost $F_b$ and assignment cost $C_b$ satisfying
$F_b\leq F^*+2\al MR^*(\al)+2C^*/\gm,\ C_b\leq\gm\bigl(F^*+2\al MR^*(\al)\bigr)+C^*$,
where each open facility serves at least $\al M$ clients. 
\end{lemma}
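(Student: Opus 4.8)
The plan is to read off the two cost inequalities from the standard local-search analysis of~\cite{CharikarG99,AryaGKMMP01}, instantiated on the \ufl instance whose opening costs have been scaled by $\gm$, using as the comparison solution the reference solution $S$ identified just above the lemma; the ``serves at least $\al M$ clients'' statement will then come essentially for free from Claim~\ref{bicrit-lb}. Recall that $S$ opens exactly the facilities of $F^*$ and uses the optimal assignment, so its unscaled \ufl-facility cost $F^S$ satisfies $F^S\le F^*+2\al MR^*(\al)$ and its assignment cost is $C^S\le C^*$. Throughout, write $g_i=f_i+2\al MR_i(\al)$ for the (unscaled) \ufl-opening cost of $i$, so that the search actually optimizes over opening costs $\gm g_i$.

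First I would prove the assignment-cost bound via add moves. For each facility $o$ opened by $S$, test adding $o$ to the returned solution and rerouting to $o$ every client that $S$ assigns to $o$. Local optimality for the scaled instance forces the cost change to be nonnegative, so $\gm g_o\ge\sum_{j:\,S(j)=o}\bigl(d_j-c_{oj}\bigr)$, where $d_j$ denotes the distance from $j$ to the facility serving it in the returned solution. Summing over all $o$ opened by $S$, the left-hand side is $\gm F^S$ and the right-hand side telescopes to $C_b-C^S$, which yields $C_b\le\gm F^S+C^S\le\gm\bigl(F^*+2\al MR^*(\al)\bigr)+C^*$.

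Next I would prove the facility-cost bound via the delete and swap moves. Pairing each facility of the returned solution with a facility of $S$ in the usual captured/uncaptured manner and summing the local-optimality inequality over all the resulting test moves, the scaled facility savings $\gm F_b$ get charged against client-reassignment costs that the triangle inequality bounds by $2C^S$; this gives $\gm F_b\le\gm F^S+2C^S$, i.e.\ $F_b\le F^S+2C^S/\gm\le F^*+2\al MR^*(\al)+2C^*/\gm$. Finally, because the local-search operations include single-facility deletes, the returned solution is delete-optimal, so Claim~\ref{bicrit-lb} applies and each open facility serves at least $\al M$ clients.

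I expect the delete/swap step to be the main obstacle: obtaining the \emph{clean} right-hand side $2C^S$ (rather than one that also drags in a $C_b$ term) is precisely what makes the two guarantees asymmetric, and this asymmetry is what the rest of the paper exploits through the choice of $\gm$, so the reassignment bookkeeping must be done carefully. A secondary subtlety is that Claim~\ref{bicrit-lb} is stated for delete-optimality with respect to the \emph{unscaled} costs $g_i$, whereas the search is only locally optimal for the scaled costs $\gm g_i$; I would reconcile this through the post-processing mentioned in the text (running improving unscaled delete moves), noting that such deletes only decrease $F_b$, so its bound is preserved, while verifying that they do not violate the $C_b$ bound.
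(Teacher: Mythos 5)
Your overall route is the same as the paper's: the paper also proves the two cost inequalities by invoking the add-move and capture-based swap/delete analysis of~\cite{CharikarG99,AryaGKMMP01} on the $\gm$-scaled instance, with the comparison solution being exactly the $S$ you use ($F^S\leq F^*+2\al MR^*(\al)$, $C^S\leq C^*$), and it also derives the lower-bound property by combining delete-optimality with Claim~\ref{bicrit-lb}. Your first two steps (the add-move inequality giving $C_b\leq\gm F^S+C^S$, and the capture/swap bookkeeping giving $\gm F_b\leq\gm F^S+2C^S$) are precisely what the cited analyses supply, so those parts are fine.

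The genuine gap is in the reconciliation step you defer at the end, and it cannot be completed as proposed. Write $g_i=f_i+2\al MR_i(\al)$. The local search is locally optimal only for the \emph{scaled} costs, so for every open $i$ one only gets $\Delta_i\geq\gm g_i$, where $\Delta_i$ is the reassignment cost of deleting $i$; running the proof of Claim~\ref{bicrit-lb} with this weaker inequality yields only that each open facility serves at least $\gm\al M$ clients, which is vacuous in the regime $\gm\approx 3/\hh(\al)\ll 1$ used in Theorem~\ref{mainthm}. Your repair---postprocessing with deletes that improve the \emph{unscaled} cost---does restore the hypothesis of Claim~\ref{bicrit-lb} and preserves the $F_b$ bound, but the deferred ``verification'' of the $C_b$ bound is not a formality: it is false for this procedure, because each unscaled delete raises the assignment cost (by up to $g_i$) and destroys the scaled add-optimality certificates on which $C_b\leq\gm F^S+C^S$ rests. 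A concrete failure, at the level at which both you and the paper argue (bounds against an arbitrary reference solution $S$): take $k$ clients co-located with a facility $i$ having $g_i=k/\gm$, the reference facility $o$ at distance $1$ with $g_o=1$, and one more client co-located with a zero-cost facility $i'$ at distance $D=1/(2\gm)$ from the cluster. Then $\{i,i'\}$ is a scaled local optimum ($\swap(i,o)$ changes the scaled cost by $\gm g_o-\gm g_i+k=\gm>0$, scaled $\drop(i)$ by $kD-\gm g_i=k/(2\gm)-k>0$, and $\add(o)$ by $\gm>0$), yet $\drop(i)$ improves the \emph{unscaled} cost since $kD<g_i$; after it the assignment cost is $k/(2\gm)$, whereas $\gm F^S+C^S=\gm+k+D-1$, and with $k\approx 1/\gm$ the former exceeds the latter by a factor $\Theta(1/\gm)$. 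Iterating does not help: re-running the scaled search re-adds $i$ (saving $k/(2\gm)-k>0$), and the process cycles, so one cannot even define the combined local search naively. To be fair, the paper's own text has the same lacuna---its footnote addresses only approximate local optimality, and it silently identifies delete-optimality for the scaled search with the unscaled delete-optimality that Claim~\ref{bicrit-lb} needs---but since your proof explicitly takes on the burden of closing this gap, the failure of the proposed closure is the substantive defect; an actual fix seems to require changing the construction (e.g., using opening costs $f_i+2\al MR_i(\al)/\gm$ so that the penalty term survives scaling and Claim~\ref{bicrit-lb} applies directly to the scaled local optimum), which weakens the constants in the lemma's two inequalities.
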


\begin{lemma}[~\cite{Svitkina08}] \label{i2mapping}
(i) The (assignment) cost $C^*_{\I_2}$ of an optimal solution to $\I_2$ is at most
$2(C_b+C^*)$. 

\noindent
(ii) Any solution to $\I_2$ of cost $C$ yields a solution to $\I$ of cost at most
$F_b+C_b+C$. 
\end{lemma}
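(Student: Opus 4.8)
The plan is to prove both parts constructively: for (i) I exhibit a feasible solution to $\I_2$ and charge its cost against $C_b$ and $C^*$, and for (ii) I take an arbitrary $\I_2$-solution and lift it back to $\I$, charging against $F_b$, $C_b$ and the given $\I_2$-cost. The only tool needed throughout is the triangle inequality, together with the observation that every facility of $\F'$ has zero opening cost in $\I_2$, so an $\I_2$-solution's cost is purely its assignment cost. To fix notation, let $\sg(j)\in\F'$ be the facility serving client $j$ in the bicriteria solution (so $C_b=\sum_j c_{\sg(j),j}$ and $\I_2$ places $j$ at the point $\sg(j)$), and let $\sg^*(j)\in F^*$ be the facility serving $j$ in the optimal solution (so $C^*=\sum_j c_{\sg^*(j),j}$). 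For $i^*\in F^*$, write $\D_{i^*}=\{j:\sg^*(j)=i^*\}$, so $|\D_{i^*}|\ge M$.

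For part (i), I would construct a feasible $\I_2$-solution as follows. For each $i^*\in F^*$, let $\phi(i^*)=\argmin_{i\in\F'}c_{i^*,i}$ be the $\F'$-facility nearest $i^*$; open the set $O=\{\phi(i^*):i^*\in F^*\}\sse\F'$ and route each client $j$ (living at $\sg(j)$ in $\I_2$) to $\phi(\sg^*(j))\in O$. Feasibility is immediate: the hub $\phi(i^*)$ receives all of $\D_{i^*}$ and hence at least $M$ clients, so every open facility meets the lower bound. To bound the cost I would apply the triangle inequality to each routed client,
$$c_{\sg(j),\phi(\sg^*(j))}\le c_{\sg(j),j}+c_{j,\sg^*(j)}+c_{\sg^*(j),\phi(\sg^*(j))},$$
whose first two terms sum to $C_b+C^*$. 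The third term is where the real work lies, and I expect it to be the main obstacle: it is a distance from an optimal facility to its nearest $\F'$-facility that gets paid once for each of the $\ge M$ clients of that facility, so a naive bound would be far too large. The point of choosing $\phi(i^*)$ to be \emph{nearest} is that any single client of $\D_{i^*}$ is a valid witness: for every $j\in\D_{i^*}$ we have $c_{i^*,\phi(i^*)}\le c_{i^*,\sg(j)}\le c_{i^*,j}+c_{j,\sg(j)}$. Summing this over $j\in\D_{i^*}$ and then over $i^*\in F^*$ distributes the shared hub-distance across the clients and gives $\sum_j c_{\sg^*(j),\phi(\sg^*(j))}\le\sum_j\bigl(c_{\sg^*(j),j}+c_{j,\sg(j)}\bigr)=C^*+C_b$. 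Combining, the total $\I_2$-assignment cost is at most $(C_b+C^*)+(C^*+C_b)=2(C_b+C^*)$, proving (i).

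For part (ii), given any $\I_2$-solution of cost $C$, which opens some $O\sse\F'$ and assigns each client $j$ (residing at $\sg(j)$) to a facility $\tau(j)\in O$ with $C=\sum_j c_{\sg(j),\tau(j)}$ and every facility of $O$ serving at least $M$ clients, I construct a solution to $\I$ by opening the same set $O$ and assigning each original client $j$ to $\tau(j)$. Since $O\sse\F'$ and the extra opening-cost terms $2\al MR_i(\al)$ are nonnegative, the facility-opening cost is $\sum_{i\in O}f_i\le\sum_{i\in\F'}f_i\le F_b$. For the assignment cost, the triangle inequality gives $c_{j,\tau(j)}\le c_{j,\sg(j)}+c_{\sg(j),\tau(j)}$ for each $j$, and summing yields a total of at most $C_b+C$; the lower-bound constraint transfers directly because the clients counted at each facility of $O$ in the $\I_2$-solution are exactly the original clients. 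Adding the two bounds produces a solution to $\I$ of cost at most $F_b+C_b+C$, as claimed. The step in (i) charging the third triangle-inequality term is the only delicate point; part (ii) and the remaining terms are routine once the constructions are in place.
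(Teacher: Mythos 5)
Your proposal is correct and takes essentially the same route as the proof this lemma is quoted from (the paper itself defers to \cite{Svitkina08} and gives no proof of its own): part (i) is the standard relay $\sg(j)\to j\to\sg^*(j)\to\phi(\sg^*(j))$, where choosing $\phi(\sg^*(j))$ as the \emph{nearest} $\F'$-facility lets the hub leg be charged, client by client, to $c_{\sg^*(j),j}+c_{j,\sg(j)}$, giving the $2(C_b+C^*)$ bound; part (ii) is the routine lifting using $\sum_{i\in O}f_i\le\sum_{i\in\F'}f_i\le F_b$ and one triangle inequality per client. Your feasibility checks (each opened hub inherits at least one full optimal cluster of size at least $M$ in (i); the lower bounds transfer verbatim in (ii)) are also sound.
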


\begin{theorem} \label{i2thm}
For any $\al>0.5$, there is a $g(\al)$-approximation algorithm for $\I_2(\al)$, where  
$g(\al)=\frac{2}{\al}+\frac{2\al}{2\al-1}+2\sqrt{\frac{2}{\al^2}+\frac{4}{2\al-1}}$.
\end{theorem}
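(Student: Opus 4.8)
The plan is to establish the theorem in three stages mirroring the outline in the introduction, and then assemble them with a scaling optimisation. The three stages are: (a) reduce $\I_2(\al)$ to a \cdufl instance $\hcI$; (b) run the local-search algorithm of Section~\ref{cdufl-ls} on $\hcI$ and extract its asymmetric facility/connection-cost guarantees; and (c) map the resulting \cdufl-solution back to a feasible $\I_2$-solution while controlling the blow-up, as developed in Section~\ref{i2-soln}. For (a) I follow Svitkina's construction: at every location $i\in\F'$ I place a demand point $d_i$ of demand $M-n_i$ and a supply point $s_i$, with $s_i$ at distance $c_{ii'}$ from each $d_{i'}$. A supply point with $n_i\le M$ is made uncapacitated (this is the \cdufl modification) while retaining its opening cost $\htf_i$ from~\cite{Svitkina08}, whereas a supply point with $n_i>M$ keeps zero opening cost and finite capacity; in particular, when every $n_i\le M$ the instance $\hcI$ is a pure \ufl-instance. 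Opening $s_i$ encodes closing facility $i$: $s_i$ first serves its co-located $d_i$ at zero cost and then ships its remaining clients to the demand points of facilities that stay open. The routine first step is to check (this is essentially Svitkina's correctness argument, adapted to the uncapacitated setting) that feasible \cdufl- and $\I_2$-solutions correspond, and quantitatively that the optimal $\I_2$-solution induces a \cdufl-solution of cost $O_{\al}(1)\cdot C^*_{\I_2}$; the $\tfrac1\al$-type coefficients in $g(\al)$ already originate here, since every demand is at most $(1-\al)M$ while every supply is at least $\al M$, whereas the $\frac{\al}{2\al-1}$ coefficient surfaces only in stage (c).

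For stage (b) I would invoke the \cdufl local-search guarantee, applying an auxiliary scaling parameter $\gm'>0$ to the opening costs $\htf_i$. As for \ufl~\cite{AryaGKMMP01,CharikarG99}, a local optimum then satisfies \emph{separate} upper bounds on its opening cost $\hF$ and its connection cost $\hC$, each expressed asymmetrically in terms of the facility cost $\hF^*$ and connection cost $\hC^*$ of the induced optimal \cdufl-solution and in terms of $\gm'$ (these are exactly the bounds that balance to the $1+\sqrt2$ guarantee). The payoff of keeping the two bounds separate, rather than collapsing them into a single factor, is that $\hF$ and $\hC$ are charged by different $\al$-dependent coefficients in the stage-(c) mapping, so tracking them independently is what lets the scaling optimisation produce the square-root summand $2\sqrt{\frac{2}{\al^2}+\frac{4}{2\al-1}}$.

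The crux, and the step I expect to be the main obstacle, is stage (c): the \cdufl modification destroys the capacity that made Svitkina's mapping easy. Because $s_i$ is now uncapacitated, the local-search solution may ship \emph{more} than the $n_i$ real clients of $i$ to other demand points, which is infeasible for the corresponding closed facility. My plan is to repair each over-supplying $s_i$ locally: examining the set $\{i\}\cup\{i' : d_{i'}\text{ is served by }s_i\}$, I would open a carefully chosen subset of these locations and close the rest, exploiting the slack $n_i\ge\al M>\tfrac M2$ to guarantee that whichever facilities remain open absorb the displaced clients and still reach $M$. The hard part is to prove that this rerouting inflates the connection cost by only a bounded, $\al$-dependent factor; the natural strategy is a charging argument that routes each reassigned client, via the triangle inequality, through the over-supplying $s_i$. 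Here the denominator $2\al-1$ enters, because any two co-located groups together supply $n_i+n_{i'}\ge 2\al M$, a slack of $(2\al-1)M$ above the bound $M$, which is precisely what limits how many clients a repair must move. The output is a bound of the form $(\I_2\text{-cost})\le a(\al)\,\hF+b(\al)\,\hC$ for explicit $a(\al),b(\al)$.

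Finally I would assemble the chain: substitute the stage-(b) bounds on $\hF,\hC$ into $a(\al)\hF+b(\al)\hC$, use $\hF^*+\hC^*\le O_{\al}(1)\,C^*_{\I_2}$ from stage (a), and minimise over $\gm'$. Because one bound carries a factor $\gm'$ on $\hF^*$ while the other carries $1/\gm'$ on $\hC^*$, the combined expression has the shape $P\gm'+Q/\gm'$ plus $\gm'$-free terms, so AM--GM contributes the $2\sqrt{\cdot}$ summand; verifying that the three pieces collapse to exactly $\frac2\al+\frac{2\al}{2\al-1}+2\sqrt{\frac{2}{\al^2}+\frac{4}{2\al-1}}$ is a routine if slightly tedious computation. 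The hypothesis $\al>0.5$ is used throughout --- most visibly in keeping $2\al-1>0$ so that every term above is finite and positive, and in providing the merging slack that stage (c) relies on.
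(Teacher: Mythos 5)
Your proposal follows essentially the same route as the paper's own proof: the reduction of $\I_2(\al)$ to a \cdufl instance whose opening costs carry a scaling parameter (the paper's $\dt$, your $\gm'$), the asymmetric facility/assignment guarantees of the local-search algorithm (Theorem~\ref{cdufl-thm}), the mapping back to an $\I_2$-solution in which an over-supplying uncapacitated point $i$ is repaired by opening a carefully chosen subset of $\{i\}\cup D(i)$ and charging the rerouting through $i$ via the triangle inequality against the $(2\al-1)M$ slack (Theorem~\ref{i2map}), and the concluding AM--GM optimization over the scaling parameter that yields the $2\sqrt{\frac{2}{\al^2}+\frac{4}{2\al-1}}$ term. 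The only discrepancies are cosmetic bookkeeping (in the paper the $\frac1\al$-type coefficients arise in the mapping step, not in the reduction, whose cost bound in Lemma~\ref{cdufl-cost} is $\al$-free, and locations with $n_i>M$ also retain an uncapacitated costly supply point), neither of which changes the plan's soundness.
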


\begin{remark} \label{i2remk}
Our $g(\al)$-approximation ratio for $\I_2(\al)$ improves upon the approximation obtained 
in~\cite{Svitkina08} by a factor of roughly 2 {\em for all $\al$}. Thus, plugging in
our algorithm for solving $\I_2$ in the \lbfl-algorithm in~\cite{Svitkina08} (and choosing
a suitable $\al$), already yields an improved approximation factor of $218$ for \lbfl. 
\end{remark}

\begin{proofofnobox}{Theorem~\ref{mainthm}}
Recall that $h(\al)=1+\frac{4}{\al}+\frac{4\al}{2\al-1}+4\sqrt{\frac{6}{2\al-1}}$.
Note that $2g(\al)+1\leq h(\al)$ for all $\al\in[0,1]$; we use this upper bound throughout
below. 
Combining Theorem~\ref{i2thm} and the bounds in Lemmas~\ref{bicrit-perf}
and~\ref{i2mapping}, we obtain a solution to $\I$ of cost at most 
$F_b+\bigl(2g(\al)+1\bigr)C_b+2g(\al)C^*$
\begin{eqnarray*}
& \leq & F^*+2\al MR^*(\al)+\frac{2C^*}{\gm}+\hh(\al)\gm
\Bigl(F^*+2\al MR^*(\al)\Bigr)+\bigl(2\hh(\al)-1\bigr)C^* \\
& = & F^*\bigl(1+\gm\hh(\al)\bigr)+
C^*\Bigl(2\hh(\al)-1+\tfrac{2}{\gm}\Bigr)+2\gm\al MR^*(\al)\hh(\al)+2\al MR^*(\al).
\end{eqnarray*}

Part (i) follows by plugging in the values of $\al$ and $\gm$, and using the 
bound $R^*(\al)\leq\frac{C^*}{M(1-\al)}$. 

Let $\beta=0.67$. For part (ii), we set $\gm=\frac{\Kh}{\sqrt{\hh(\al)}}$, where  
$\Kh=\left(\ln^2(1/\beta)\cdot\E[\al]{\hh(\al)}/
\bigl(\frac{\int_{\beta}^1\hh(x)dx}{1-\beta}\bigr)\right)^{\frac{1}{4}}$. Plugging in this
$\gm$, we see that the cost incurred is at most   
$$
F^*\bigl(1+\Kh\sqrt{\hh(\al)}\bigr)+
C^*\Bigl(2\hh(\al)-1+\tfrac{2}{\Kh}\sqrt{\hh(\al)}\Bigr)+
2\Kh\al M R^*(\al)\sqrt{\hh(\al)}+2\al MR^*(\al). 
$$
We now bound the expected cost incurred when one chooses $\al$ randomly according to the 
stated density function. 
This will also yield an explicit expression for $\Kh$ (as a function of $\beta$), thus
showing that $K$ (and hence, $\gm$) can be computed efficiently.
We note that $\E{\sqrt{X}}\leq\sqrt{\E{X}}$ and utilize Chebyshev's Integral inequality
(see~\cite{HardyLP52}): if $f$ and $g$ are non-increasing and non-decreasing functions
respectively from $[a,b]$ to $\R_+$, then  
$\int_a^b f(x)g(x)dx\leq\frac{(\int_a^b f(x)dx)(\int_a^b g(x)dx)}{b-a}$. 
Observe that $h(\al)$ decreases with $\al$. Recall that $\beta=0.67$.
We have the following.
\begin{eqnarray*}
\E[\al]{\hh(\al)} & = & \ch_2(\beta):=
\Bigl[\frac{4}{\beta}-4+8\sqrt{6}\bigl(\pi/4-\tan^{-1}(\sqrt{2\beta-1})\bigr)+
2\ln\Bigl(\frac{1}{2\beta-1}\Bigr)+\ln(1/\beta)\Bigr]/\ln(1/\beta) \\[0.2ex]
\E[\al]{\al MR^*(\al)} & = & M\Bigl(\int_{\beta}^1 R^*(x)dx\Bigr)/\ln(1/\beta)
\leq C^*/\ln(1/\beta).
\end{eqnarray*}
Finally, using Chebyshev's inequality, we obtain that
$$
\E[\al]{\al M R^*(\al)\sqrt{\hh(\al)}} \leq 
\Bigl[M\Bigl(\int_{\beta}^1 R^*(x)dx\Bigr)
\tfrac{\int_{\beta}^1 dx\sqrt{\hh(x)}}{1-\beta}\Bigr]/\ln(1/\beta) 
\leq \Bigl[C^*\sqrt{\ch_3(\beta)}\Bigr]/\ln(1/\beta),
$$
where 
$$ 
\ch_3(\beta):=\bigl(\int_{\beta}^1 \hh(x)dx\bigr)/(1-\beta)=
\Bigl[4\ln(1/\beta)+4\sqrt{6}\bigl(1-\sqrt{2\beta-1}\bigr)+3(1-\beta)+
\ln\Bigl(\frac{1}{2\beta-1}\Bigr)\Bigr]/(1-\beta).
$$
The second inequality follows since 
$\bigl(\int_{\beta}^1 dx\sqrt{\hh(x)}\bigr)/(1-\beta)
=\E[\al\sim\text{uniform in $[\beta,1]$}]{\sqrt{\hh(\al)}}$. 
Plugging in these bounds, we get that 
$\Kh=\bigl(\ln^2(1/\beta)\ch_2(\beta)/\ch_3(\beta)\bigr)^{0.25}$ and 
the total cost is at most
$$
F^*\Bigl(1+
\bigl(\tfrac{\ln^2(1/\beta)(\ch_2(\beta))^3}{\ch_3(\beta)}\bigr)^{\frac{1}{4}}\Bigr)+
C^*\Bigl(2\ch_2(\beta)-1+
4\bigl(\tfrac{\ch_2(\beta)\ch_3(\beta)}{\ln^2(1/\beta)}\bigr)^{\frac{1}{4}}
+\tfrac{2}{\ln(1/\beta)}\Bigr)
< 82.59(F^*+C^*). \qquad \qedsymbol \\[-20pt]
$$
\end{proofofnobox}

\section{Solving instance \boldmath $\I_2(\al)$} \label{i2solve}
We now describe our algorithm for solving instance $\I_2(\al)$ and analyze its performance 
guarantee, thereby proving Theorem~\ref{i2thm}. As mentioned earlier, one of the key
differences between our algorithm and the one in~\cite{Svitkina08} is that instead of
reducing $\I_2$ to capacitated facility location (\cfl), we solve $\I_2$ by reducing it to
a new problem that we call {\em capacity-discounted \ufl} (\cdufl). 
\cdufl is a special case of \cfl where all facilities with non-zero opening cost are
uncapacitated (i.e., have infinite capacity). Perhaps surprisingly, despite this 
special structure, \cdufl inherits the intractability of \cfl 
with respect to LP-based approximation guarantees: 
there is no known LP-relaxation for \cdufl that has constant integrality gap; 
Appendix~\ref{cduflgap} shows that the natural LP-relaxation for \cdufl has bad
integrality gap.
However, as we show in Section~\ref{cdufl-ls}, we can obtain a simple local-search
algorithm for \cdufl  
whose approximation ratio is better than the current-best approximation for \cfl. 

Recall that $\I_2$ has only the points in $\F'\sse\F$, and there are 
$n_i\geq\al M$ co-located clients at each $i\in\F'$. 
Let $l(i)=\min_{i'\in\F',i'\neq i}c_{ii'}$. 
To avoid confusion, we refer to the facilities and clients in the
\cdufl instance as supply points and demand points respectively. The \cdufl instance
created to solve $\I_2$ resembles the \cfl instance created in~\cite{Svitkina08}; the
difference is that all supply points with non-zero opening costs are now uncapacitated.  
More precisely, at each $i\in\F'$, we create an uncapacitated supply point with opening
cost $\dt\min\{n_i,M\}l(i)$, where $\dt$ is a parameter we fix later. If $n_i>M$ we create
a second supply point at $i$ with capacity $n_i-M$ and zero opening cost. If $n_i<M$, we
create a demand point at $i$ with demand $M-n_i$. 
Let $\I'$ denote this \cdufl instance (see Fig.~\ref{I2toIp}).
Let $\F^u,\ \F^c$ denote respectively the set of uncapacitated and capacitated supply
points of $\I'$.  
Roughly speaking, satisfying a demand point $i$ by non-co-located supply points
translates to leaving facility $i$ open in the $\I_2$ solution; hence, its demand is set
to $M-n_i$, which is the number of additional clients it needs. Conversely, opening
the uncapacitated supply point at $i$ and supplying demand points from $i$ translates to    
closing $i$ in the $\I_2$ solution and transferring its co-located clients to other open
facilities.  

\begin{figure}[ht!]
\centerline{\resizebox{!}{2.5in}{\input{I2toIp.pstex_t}}}
\caption{%
(a) An $\I_2$ instance. Each box denotes a facility, and the number inside the box is the 
number of co-located clients; a dashed arrow $i\rightarrow i'$ denotes that $i'$ is the
closest facility to $i$. \newline
(b) The corresponding $\I'$ instance. The boxes and circles represent supply points and
demand points respectively, and points inside a dotted oval are co-located. A solid box
denotes an uncapacitated supply point, and a dashed box denotes a capacitated facility
whose capacity is shown inside the box. The number inside a circle is the demand of that
demand point. The arrows indicate a solution $S$ to $\I'$, where $i$ and $i'$ are the two 
open uncapacitated supply points.}  
\label{I2toIp}
\end{figure}

\begin{lemma}[~\cite{Svitkina08}] \label{cdufl-cost}
There exists a solution to $\I'$ with facility cost $F\leq\dt C^*_{\I_2}$ and assignment
cost $C\leq C^*_{\I_2}$.
\end{lemma}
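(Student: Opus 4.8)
The plan is to read a feasible solution to the \cdufl instance $\I'$ directly off an optimal solution to $\I_2$, and to charge its facility and assignment costs against $C^*_{\I_2}$ term by term. Fix an optimal $\I_2$-solution, let $O\sse\F'$ be its set of open facilities, and let $\sg$ be its assignment of (co-located) clients to facilities in $O$. Since all opening costs in $\I_2$ are zero and the $c_{ij}$'s form a metric, we may assume without loss of generality that every client co-located at an open facility $i\in O$ is served by $i$ itself (relocating it only increases cost). Consequently every \emph{moved} client---one assigned to a facility other than its own location---originates at a \emph{closed} facility $j\notin O$, and is moved a distance at least $l(j)$. Hence $C^*_{\I_2}$ equals the total distance travelled by the moved clients, and this total decomposes as a sum over closed facilities $j\notin O$ of the cost of relocating the $n_j$ clients originating at $j$.

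The $\I'$-solution I would build opens the uncapacitated supply point at every closed facility $j\notin O$; the zero-cost capacitated points in $\F^c$ are irrelevant to the cost and may be ignored. For the facility cost, each opened supply point costs $\dt\min\{n_j,M\}l(j)$. Since the $n_j$ clients at $j$ are each relocated a distance at least $l(j)$ in the $\I_2$-solution, they contribute at least $n_j l(j)\geq\min\{n_j,M\}l(j)$ to $C^*_{\I_2}$. Summing over $j\notin O$ gives facility cost $F=\dt\sum_{j\notin O}\min\{n_j,M\}l(j)\leq\dt\,C^*_{\I_2}$, the first claimed bound. (Facilities with $n_j>M$ need no special treatment here: when closed they contribute $\min\{n_j,M\}l(j)=Ml(j)\leq n_j l(j)$, and they create no demand point.)

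For the assignment I would satisfy demand points in two ways, according to whether the underlying facility is open or closed in the $\I_2$-solution. A demand point at a \emph{closed} facility $i\notin O$ (which exists only when $n_i<M$) is served at zero cost by the co-located, now-open supply point at $i$. A demand point at an \emph{open} facility $i\in O$ with $n_i<M$ has demand $M-n_i$; since $i$ serves at least $M$ clients in the $\I_2$-solution and keeps its own $n_i$, it receives at least $M-n_i$ moved clients, each arriving from some closed facility $j\notin O$ whose supply point is open. Routing $M-n_i$ units into this demand point from exactly those supply points mirrors $M-n_i$ of the corresponding client movements at identical per-unit cost $c_{ji}$. Because the supply points are uncapacitated, the routing is feasible, and because distinct demand points consume disjoint sets of moved clients, the routed units are charged to pairwise-distinct movements comprising $C^*_{\I_2}$; hence the total assignment cost satisfies $C\leq C^*_{\I_2}$, the second bound.

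The main obstacle is purely the bookkeeping in this last step: one must confirm that every demand point is fully served while assigning each routed unit to a \emph{distinct} client movement of the $\I_2$-solution, so that no movement is double-counted and $C\le C^*_{\I_2}$ holds without slack. Everything else is routine. As a sanity check (and an alternative proof), note that $\I'$ is obtained from the \cfl instance of~\cite{Svitkina08} purely by relaxing the capacities of the nonzero-cost supply points; thus any solution feasible for that \cfl instance remains feasible for $\I'$ with the same cost, and the present bound transfers directly from the corresponding statement proved there.
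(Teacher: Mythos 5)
Your main construction has a genuine gap, and it sits exactly where you declared the remaining work ``routine bookkeeping.'' The WLOG at the start---that in an optimal $\I_2$-solution every client co-located at an open facility is served by that facility---is false, because reassigning such a client back to its home facility can destroy feasibility at the facility it is taken from: the lower bound of $M$ is a hard constraint, so ``relocating it only increases cost'' is not a valid exchange argument. Concretely, take $\F'=\{a,b\}$ with $n_a=M+1$, $n_b=M-1$, and $c_{ab}=d$: the optimal solution opens both facilities (cost $d$, versus $(M-1)d$ for opening $a$ alone) and it \emph{must} ship one client from the open facility $a$ to the open facility $b$. Hence ``every moved client originates at a closed facility'' fails, and your routing step breaks: the demand point at $b$ (demand $1$) receives its client from the open facility $a$, whose positive-cost uncapacitated supply point you never open---and cannot afford to open, since its cost $\dt M l(a)$ cannot be charged against the single movement of cost $d$. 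This is precisely why the zero-cost capacitated supply points in $\F^c$, which you dismiss as ``irrelevant to the cost and may be ignored,'' are essential: in the example the demand at $b$ must be served by the capacity-$(n_a-M)$ zero-cost supply point at $a$. The correct normalization is weaker than yours: by a triangle-inequality swap one may assume no open facility simultaneously imports and exports clients; then an exporting open facility $j$ serves only its own clients, hence exports at most $n_j-M$ units, which fits within the capacity of its zero-cost supply point, while imports from closed facilities are routed from the opened uncapacitated supply points. (Your facility-cost bound, by contrast, is fine as stated, since it only uses that a closed facility's clients each travel at least $l(j)$.)

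Your closing ``sanity check'' is, however, a complete and correct proof on its own, and it is in fact the paper's route: Lemma~\ref{cdufl-cost} is stated as a citation to~\cite{Svitkina08}, because $\I'$ is obtained from Svitkina's \cfl instance by relaxing the capacities of the positive-cost supply points, so her \cfl solution remains feasible for $\I'$ at the same cost and the bounds transfer verbatim. Had you led with that observation, you would be done; note, though, that Svitkina's proof of the \cfl statement must itself resolve exactly the import/export issue that your direct construction glosses over.
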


\begin{theorem} \label{cdufl-thm}
(i) Given any \cdufl instance, one can efficiently compute a solution with 
facility-opening cost $\hF\leq F^{\sol}+2C^{\sol}$ and assignment cost 
$\hC\leq F^{\sol}+C^{\sol}$, where $F^{\sol}$ and $C^{\sol}$ are the facility and
assignment costs of an arbitrary solution to the \cdufl instance.

\noindent
(ii) Thus, Lemma~\ref{cdufl-cost} implies that one can compute a solution to $\I'$ with
facility cost $F_{\I'}$ and assignment cost $C_{\I'}$ satisfying 
$F_{\I'}\leq(2+\dt)C^*_{\I_2},\ C_{\I'}\leq(1+\dt)C^*_{\I_2}$. 
\end{theorem}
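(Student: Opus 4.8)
The plan is to first dispose of part (ii), which is immediate: Lemma~\ref{cdufl-cost} exhibits a solution to $\I'$ with facility cost $F^{\sol}\le\dt C^*_{\I_2}$ and assignment cost $C^{\sol}\le C^*_{\I_2}$, so feeding this as the ``arbitrary solution'' into part (i) gives $F_{\I'}=\hF\le F^{\sol}+2C^{\sol}\le(2+\dt)C^*_{\I_2}$ and $C_{\I'}=\hC\le F^{\sol}+C^{\sol}\le(1+\dt)C^*_{\I_2}$. All the content therefore sits in part (i), whose bounds $\hC\le F^{\sol}+C^{\sol}$ and $\hF\le F^{\sol}+2C^{\sol}$ are exactly the asymmetric guarantees produced by the \ufl local-search analysis of~\cite{AryaGKMMP01}; my goal is to show that they survive the finite-capacity, zero-cost facilities. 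For the setup, since every capacitated supply point has zero opening cost I keep all of $\F^c$ open in every candidate solution (this is without loss of generality and never hurts either cost), so a solution is determined by the open set $U\sse\F^u$ of uncapacitated points, and given $U$ the assignment is a minimum-cost transportation of all demand onto $U\cup\F^c$ (the points in $U$ having infinite capacity). The moves $\add$, $\drop$, $\swap$ act only on $U$, each followed by recomputing the optimal transportation; $\hS$ denotes a solution that is local-optimal for all three moves, with open set $\hat U$, assignment $\hat\phi$, and costs $\hF,\hC$, and $\sol$ is the reference solution with open set $U^{\sol}$, assignment $\phi^{\sol}$, costs $F^{\sol},C^{\sol}$. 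The one fact I use throughout is that, because $\hat\phi$ is an optimal transportation, $\hat c_j\le c_{ij}$ for every client $j$ and every open facility $i$ with spare capacity under $\hat\phi$ (in particular for every open uncapacitated $i$); this is the capacitated analogue of ``each client is served by its nearest open facility.''

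For the assignment bound I would follow the \ufl argument and split clients by their $\sol$-facility. If $\phi^{\sol}(j)$ is open under $\hat\phi$ with spare capacity---an uncapacitated point of $U^{\sol}\cap\hat U$, or a non-full capacitated point---then $\hat c_j\le c^{\sol}_j$ for free. If $\phi^{\sol}(j)=o$ is an uncapacitated point not in $\hat U$, then $\add(o)$ is non-improving; lower-bounding its assignment gain by the single feasible reassignment that moves all clients with $\phi^{\sol}(\cdot)=o$ onto $o$ (feasible since $o$ is uncapacitated) yields $\sum_{j:\phi^{\sol}(j)=o}(\hat c_j-c^{\sol}_j)\le f_o$, and summing over such $o$ contributes at most $F^{\sol}$. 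Adding the two groups gives $\hC\le F^{\sol}+C^{\sol}$, except for clients that $\sol$ routes to a capacitated point that is full under $\hat\phi$, which I treat below.

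For the facility bound I would transplant the charging scheme of~\cite{AryaGKMMP01} essentially verbatim, observing that it only ever opens or closes uncapacitated points (capacitated points are free and always open, hence never swapped). As there, I map each $o\in U^{\sol}$ to the open local facility nearest to it, classify each $s\in\hat U$ by how many reference facilities it captures, and pay for $s$ by a $\drop(s)$ move when it captures none and by $\swap$ moves otherwise; a client $j$ displaced from $s$ is rerouted towards $\phi^{\sol}(j)$ and thence to a nearby open local facility, and the triangle inequality converts this into the familiar $F^{\sol}+2C^{\sol}$ bound---again provided the intended target has room.

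The crux of both bounds is the ``provided it has room'' caveat: a client may wish to be (re)assigned to a capacitated point that is saturated under $\hat\phi$, and this is exactly where \cdufl departs from \ufl, so I expect it to be the main obstacle. The remedy is a rerouting/cascade argument: to seat $j$ at a full capacitated point $c=\phi^{\sol}(j)$ I evict an equal amount of $\hat\phi$-flow from $c$; since $\sol$ also respects $c$'s capacity, the demand $\sol$ sends to $c$ is at most the demand $\hat\phi$ sends there, so these evictions can be arranged consistently for all clients at once by decomposing the difference of $\hat\phi$ and $\phi^{\sol}$ into paths and cycles on the facilities. Each capacitated point then merely passes flow through, and only the uncapacitated endpoints of a path correspond to an actual $\add$, $\drop$, or $\swap$ move; bounding the cost charged to the evicted units via the optimality of $\hat\phi$ (a displaced unit was being served no worse than $j$ could be) makes the extra terms telescope into the $C^{\sol}$ and $\hat\phi$ costs already present, leaving both asymmetric bounds intact. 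Since, as the paper notes, an instance with no capacitated points is literally a \ufl instance, this cascade is precisely the only new ingredient beyond~\cite{AryaGKMMP01}.
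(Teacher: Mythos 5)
Your part (ii) and your treatment of the assignment cost are sound, and your ``cascade,'' once made precise, is exactly the paper's mechanism: build the digraph with arcs $i\rightarrow j$ for $j\in\hcDS(i)$ and $j\rightarrow i$ for $j\in\hcDsol(i)$, decompose it into paths and cycles, and observe that a shift along a path is feasible because its endpoint $o$, being an endpoint, satisfies $|\hcDS(o)|\leq|\hcDsol(o)|-1\leq u_o-1$ when $o\in\hcFc$; summing the add-move inequalities over all paths and cycles is Lemma~\ref{asgncost} and gives $\hC\leq F^{\sol}+C^{\sol}$ (your three-way case split of clients is subsumed by this single argument).

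The genuine gap is in the facility-cost bound. You propose to keep the client-based capture relation and client bijection of~\cite{AryaGKMMP01} ``essentially verbatim'' and to patch feasibility separately by cascading; but once a displaced client must be rerouted along its whole path, the unit of the charging argument is the path, not the client, so the capture relation and the bijection must be redefined on paths --- this is precisely the step the paper supplies and your sketch omits. Concretely, in the analysis of $\drop(s)$, a client $j\in\hcDS(s)$ whose path $P_j$ ends at $o\notin\hF$ is rescued by reassigning $\tail(P_j)$ to the starting facility $s'$ of the mapped path $\pi(P_j)$; this needs $s'\neq s$, hence a bijection on $\Pcend{o}$ pairing paths from $s$ with paths from other facilities, hence the majority condition $|\Pc(s,o)|\leq|\Pcend{o}|/2$ --- a condition on path counts, not on $|\hcDS(s)\cap\hcDsol(o)|$. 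The two notions genuinely diverge because paths pass through intermediate facilities of $\hF\cap F^{\sol}$ (in particular the capacitated ones): for instance, all clients of $\hcDS(s)$ may be $\sol$-assigned to one full capacitated facility $c\in\hF\cap F^{\sol}$, so $s$ captures nothing client-wise and would be classified ``good,'' yet the paths from $s$ may continue through $c$ and constitute all of $\Pcend{o}$ for a single $o\in F^{\sol}\sm\hF$; then the drop analysis for $s$ has nowhere to send those tails (a drop move cannot open $o$), and your good-facility inequality cannot be derived. The paper's Lemmas~\ref{goodf} and~\ref{badf} rebuild capture, good/bad, and $\pi$ (with its properties (i)--(iii)) at the path level, and the final counting --- each path contributes at most $\shift(P)+\cost(P)=2\sum_{j\in\hcD(P)}C^{\sol}_j$, which is where the constant $2$ in $\hF\leq F^{\sol}+2C^{\sol}$ comes from --- is exactly the bookkeeping that your closing ``the extra terms telescope'' sentence asserts but does not supply.
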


\noindent
We defer the description of the local-search algorithm for \cdufl, and the proof of
Theorem~\ref{cdufl-thm} to Section~\ref{cdufl-ls}.
We first describe how to convert an 
$\I'$-solution to a solution to $\I_2$ with a small increase in cost, and show how this  
combined with Theorem~\ref{cdufl-thm} leads to the approximation bound for $\I_2$ stated
in Theorem~\ref{i2thm}. 

\subsection{Mapping an $\I'$-solution to an $\I_2$-solution} \label{i2-soln}
An $\I'$-solution need not directly translate to an $\I_2$ solution because an open
supply point $i$ may not supply (and hence, transfer) exactly $n_i$ units of demand (see,
e.g., $i$ and $i'$ in Fig.~\ref{I2toIp}(b)). Since we have uncapacitated supply points, we
have to consider both the cases where $i$ supplies more than $n_i$ demand (a situation not
encountered in~\cite{Svitkina08}), and less than $n_i$ demand. 
Suppose that we are given a solution $S$ to $\I'$ with facility cost $F^S$ and assignment
cost $C^S$ (see Fig.~\ref{I2toIp}(b)). Again, we abuse notation and use $F^S$ to also
denote the set of supply points that are opened in $S$. 
Let $N_i$ initialized to $n_i$ keep track of the number of clients at location
$i\in\F'$. Our goal is to reassign clients (using $S$ as a template) so that at the end
we have $N_i=0$ or $N_i\geq M$ for each $i\in\F'$.
Observe that once we have determined which facilities in $\F'$ will have $N_i\geq M$  
(i.e., the facilities to open in the $\I_2$-solution), 
one can find the best way of (re)assigning clients by solving a min-cost flow
problem. However, for purposes of analysis, it will often be convenient to explicitly
specify a (possibly suboptimal) reassignment. 
We may assume that:
(i) $\F^c\sse F^S$; 
(ii) if $S$ opens an uncapacitated supply point located at some $i\in\F'$ with $n_i>M$,
then the demand assigned to the capacitated supply point at $i$ equals its capacity
$n_i-M$; 
(iii) for each $i\in\F'$ with $n_i\leq M$, if the supply point at $i$ is open then it
serves the entire demand of the co-located demand point; and
(iv) at most one {\em uncapacitated} supply point serves, maybe partially, the demand
of any demand point; we say that this uncapacitated supply point satisfies the demand
point.  
We reassign clients in three phases.

\begin{labellist}{A}
\item {\bf (Removing capacitated supply points)\ } 
Consider any location $i\in\F'$ with $n_i>M$. Let $i^1$ and $i^2$ denote respectively the
capacitated and uncapacitated supply points located at $i$. If $i^1$ supplies $x$ units to
the demand  point at location $i'$, we transfer $x$ clients from location $i$ to $i'$. 
Now if $i^1$ has $y>0$ leftover units of capacity in $S$, then we ``move'' $y$ clients to
$i^2$ (which is not open in $S$). 
We update the $N_i$s accordingly. Note that this reassignment effectively gets rid of all
capacitated supply points.  
Thus, there is now exactly one uncapacitated supply point and at most one demand point at
each location $i\in\F'$; we refer to these simply as supply point $i$ and demand point $i$
below.  

\hspace*{-5ex}
\parbox{\textwidth}{
Let $X_i$ be the total demand from other locations assigned to supply point $i$. 
Let $\F^G=\{i\in\F': N_i<X_i\}$,  
$\F^R=\{i\in\F': N_i\geq X_i>0\}$, and $\F^B=\{i\in\F': X_i=0\}$. which is the set of
supply points that are not opened in $S$. 
Note that $N_i\geq\min\{n_i,M\}\geq\al M$ for all $i\in\F'$, and $N_i=\min\{n_i,M\}$ for
all $i\in\F^R\cup\F^G$ (because of properties (ii) and (iii) above).}

\item {\bf (Taking care of $\F^R$ and demand points satisfied by $\F^R$)\ }
For each $i\in\F^R$, if $i$ supplies $x$ units to demand point $i'$, we move $x$ clients
from $i$ to $i'$, and update $N_i, N_{i'}$.  
We now have $N_i=\min\{n_i,M\}-X_i$ residual clients at each $i\in\F^R$, which we must
reduce to 0, or increase to at least $M$. 
We follow the same procedure as in~\cite{Svitkina08}, which we sketch below. 

For each $i\in\F^R$, we include an edge $(i,i')$ where $i'\in\F'$ is  
the facility nearest to $i$ (recall that $c_{ii'}=l(i)$). We use an arbitrary but fixed
tie-breaking rule here, so each component of the resulting digraph is a directed tree
rooted at either (i) a node $r\in\F'\sm\F^R$, or (ii) a 2-cycle $(r,r'), (r',r)$, where
$r,r'\in\F^R$.  
We break up each component $\Gm$ into a collection of smaller components as follows. 
Essentially, we move the residual clients of supply points in the component bottom-up from
the leaves up to the root, cut off the component at the first 
node $u$ that accumulates at least $M$ clients, and recurse on the portion of the
component not containing $u$. More precisely, let $\Gm_u$ denote the subtree of $\Gm$
rooted at node $u\in\Gm$ (if $u$ belongs to a 2-cycle then we do not include the
other node of this 2-cycle in $\Gm_u$).
\begin{list}{--}{\itemsep=0ex \topsep=0ex \addtolength{\leftmargin}{-2ex}}
\item If $\sum_{i\in\Gm}N_i<M$, or if $\Gm$ is of type (i) and all children $u$ of the root 
satisfy $\sum_{i\in\Gm_u}N_i<M$, we leave $\Gm$ unchanged.  

\item Otherwise, let $u$ be a deepest (i.e., furthest from root) node in $\Gm$ such that 
$\sum_{i\in\Gm_u}N_i\geq M$. 
We delete the arc leaving $u$. 
If this disconnects $u$ from $\Gm\sm\Gm_u$, then we recurse on $\Gm\sm\Gm_u$.

\item Otherwise $u$ must belong to the root 2-cycle of $\Gm$. Let $r'$ be the other node
of this 2-cycle. 
If $\sum_{i\in\Gm_{r'}}N_i\geq M$, we delete $r'$'s outgoing arc (thus splitting $\Gm$
into $\Gm_u$ and $\Gm_{r'}$).
\end{list}
After applying the above procedure (to all components), if we are left with a component of
type (ii) with $\sum_{i\in\text{ component}}N_i\geq M$, we convert it to type (i) by 
arbitrarily deleting one of the arcs of the 2-cycle.
Thus, at the end of this process, we have two types of components.

\vspace{-1.5ex}
\begin{list}{(\alph{enumii})}{\usecounter{enumii} \itemsep=0ex
\addtolength{\leftmargin}{-1ex}} 
\item A tree $T$ rooted at a node $r$: we move the $N_i$ residual clients of each
non-root node $i\in T$ to $r$.
\item A type-(ii) tree $T$ with root $\{r,r'\}$: we must have $\sum_{i\in T}N_i<M$.  
Let $i'\in\F^B$ be the location nearest to $\{r,r'\}$; we move the $N_i$ residual clients 
of each $i\in T$ to $i'$.
\end{list}

\vspace{-1ex}
Update the $N_i$s to reflect the above reassignment. 
Observe that we now have $N_i=0$ or $N_i\geq M$ for each $i\in\F^R$, and 
each $i\in\F^B$ has $n_i\geq M$, or
is a demand point satisfied by a supply point in $\F^G$.
Figure~\ref{transition}(a) shows a snapshot after steps A1 and A2 have been executed on
the solution shown in Fig.~\ref{I2toIp}(b). Here $i'\in\F^R$ has one client left after  
moving clients to the bottom two facilities, which is then transferred to $i_3$.  
  
\item {\bf (Taking care of $\F^G$ and demand points satisfied by $\F^G$)\ }
For $i\in\F^G$, let $D(i)$ be the set of demand points $j\in\F',\ j\neq i$ 
satisfied by $i$, and let $D'(i)=\{j\in D(i): N_j<M\}$. 
Note that $D(i)\sse\F^B$. 
Phase A2 may only increase $N_j$ for all $j$ in $\F^B\cup\F^G$, so $N_j\geq\al M$ for 
all $j\in\F^G\cup\bigl(\bigcup_{i\in\F^G}D(i)\bigr)$.  

Fix $i\in\F^G$. We reassign clients so that $N_j=0$ or $N_j\geq M$ for all
$j\in\{i\}\cup D'(i)$, without decreasing $N_j$ for $j\in D(i)\sm D'(i)$. 
Applying this procedure to all supply points in $\F^G$ will complete our task. 
Define $Y_j=M-N_j$ (which is at most $M-n_j$) for $j\in D'(i)$. We consider two cases. 

\vspace{-1.5ex}
\begin{list}{--}{\itemsep=0ex \topsep=0ex \addtolength{\leftmargin}{-2ex}}
\item 
$\sum_{j\in D'(i)}Y_j\leq N_i$. For each $j\in D'(i)$, if $i$ supplies $x$
units to $j$, we transfer $x$ clients from $i$ to $j$. If $i$ is now left with less than
$M$ residual clients, we move these residual clients to the location in $D(i)$ nearest to
$i$.   

\item
$\sum_{j\in D'(i)}Y_j>N_i$ (see Fig.~\ref{transition}). Let $i_0=i$, and
$D'(i)=\{i_1,\ldots,i_t\}$, where $c_{i_1i}\leq \ldots\leq c_{i_ti}$. 
Let $\ell=t-\floor{\frac{\sum_{r=0}^t N_{i_r}}{M}} 
=\ceil{\frac{\sum_{r=1}^t Y_{i_r}-N_{i_0}}{M}}$, so $\ell\geq 1$ (and $\ell<t$ since 
$N_{i_0}+N_{i_1}\geq M$). Note that $\ell$ is the unique index such that 
$\sum_{r=\ell+1}^t Y_{i_r}\leq\sum_{r=0}^\ell N_{i_r}<\sum_{r=\ell+1}^t Y_{i_r}+M$. 
This enables us to transfer $Y_{i_q}$ clients to each $i_q,\ q=\ell+1,\ldots,t$ from
the locations $i_\ell,\ldots,i_0$---we do this by transferring all clients of $i_r$ 
(where $1\leq r\leq \ell$) before considering $i_{r-1}$---and be left with at most $M$
residual clients in $\{i_0,\ldots,i_\ell\}$. 
We argue that these residual clients are all concentrated at $i_0$ and $i_1$, with $i_1$
having at most $(1-\al)M$ residual clients. We transfer these residual clients to
$i_{\ell+1}$.
\end{list} 
\end{labellist}

\begin{figure}[ht!]
\centerline{\resizebox{!}{2.5in}{\input{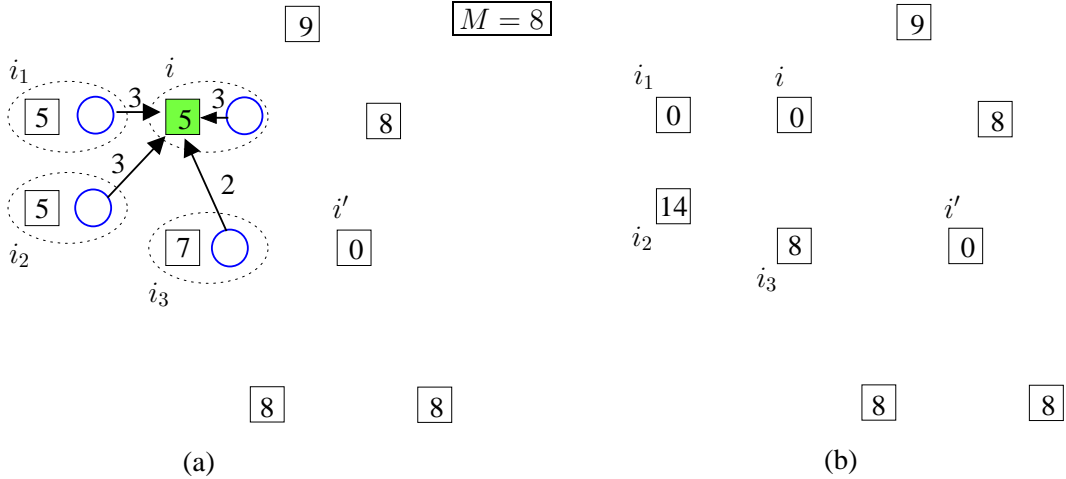}}}
\caption{The number inside a box is the current value of $N_i$; the number labeling an
arrow is the demand assignment of the $\I'$-solution. The circles indicate demand points
$j$ with $N_j<M$. 
(a) The situation after running steps A1 and A2 on the solution in Fig.~\ref{I2toIp}(b). 
(b) The situation after running step A3.}
\label{transition}
\end{figure}

\begin{theorem} \label{i2map}
The above algorithm returns an $\I_2$-solution of cost at most
$\frac{F^S}{\dt\al}+C^S\bigl(\frac{1}{\al}+\frac{2\al}{2\al-1}\bigr)$. Thus, taking $S$ to
be the solution mentioned in part (ii) of Theorem~\ref{cdufl-thm}, and 
$\dt=\sqrt{\frac{2/\al}{1/\al+(2\al)/(2\al-1)}}$, 
we obtain a solution to $\I_2(\al)$ 
satisfying the approximation bound stated in Theorem~\ref{i2thm}.
\end{theorem}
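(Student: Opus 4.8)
The plan is to track the total client-reassignment cost produced by phases A1--A3 and bound it by $\frac{F^S}{\dt\al}+C^S\bigl(\frac{1}{\al}+\frac{2\al}{2\al-1}\bigr)$; since every facility in $\I_2$ has zero opening cost, this reassignment cost \emph{is} the cost of the returned $\I_2$-solution. I would split every client move into two categories: (a) moves mirroring an assignment of $S$, i.e.\ whenever a supply point $i$ sends $x$ units to a demand point $i'$ we transfer $x$ clients from $i$ to $i'$ at cost $x\,c_{ii'}$; and (b) the ``residual'' moves used to push leftover clients at a location up to a multiple of $M$ or down to $0$. The category-(a) moves telescope to at most the assignment cost $C^S$, so the whole difficulty lies in charging the residual moves against $F^S$ and $C^S$ with the right constants.

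The central computation is the bound on the tree-moves of phase A2, which I would charge entirely to the facility-opening cost. Here I use the two facts recorded just before the lemma: for $i\in\F^R$ we have $N_i=\min\{n_i,M\}\ge\al M$ after phase A1, and the uncapacitated supply point at $i$ (open in $S$ since $X_i>0$) has opening cost $\dt\min\{n_i,M\}\,l(i)$, so $l(i)\le\frac{(\text{opening cost of }i)}{\dt\al M}$. The cutting rule guarantees that every non-root subtree $\Gm_u$ of a remaining component carries strictly fewer than $M$ residual clients, hence at most $M$ clients cross any tree edge $(i,i')$ of length $l(i)$; multiplying, the cost charged through $i$'s edge is at most $M\cdot l(i)\le\frac{(\text{opening cost of }i)}{\dt\al}$. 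Summing over the distinct open supply points $i\in\F^R$ (each has a single outgoing edge) gives the claimed $\frac{F^S}{\dt\al}$ term.

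The remaining residual moves -- the cross-component moves to a nearest $\F^B$ location in the type-(ii) case of A2, and all of phase A3 -- I would charge to $C^S$ by triangle inequality, using that the relevant target is a demand point served by the relocating supply point in $S$ and that the number of relocated clients never exceeds $M$ (and is at most $(1-\al)M$ for the overflow location $i_1$). The delicate part, which I expect to be the main obstacle, is phase A3 in the case $\sum_{j\in D'(i)}Y_j>N_i$: one must first verify the combinatorial claim that after draining $i_\ell,\dots,i_0$ top-down the unused clients are concentrated at $i_0$ and $i_1$ with $i_1$ retaining at most $(1-\al)M$ of them (this uses $N_{i_0},N_{i_1}\ge\al M$, so $N_{i_0}+N_{i_1}\ge 2\al M>M$, forcing $\ell<t$ and bounding the leftover by $M$), and then bound $c_{i_1 i_{\ell+1}}\le c_{i_1 i}+c_{i\,i_{\ell+1}}$ and aggregate the moves of at most $M$ (resp.\ $(1-\al)M$) clients against the $S$-assignment costs that $i=i_0$ pays to serve $D'(i)$. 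Carrying out this accounting so that the amplification is exactly $\frac{1}{\al}+\frac{2\al}{2\al-1}$ -- where the $\al M$ lower bounds on the served demand points produce the $\frac1\al$ factor and the $(2\al-1)M$ slack produces the $\frac{2\al}{2\al-1}$ factor -- is the technical heart of the argument.

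Finally, to deduce Theorem~\ref{i2thm} I would instantiate $S$ with the solution of Theorem~\ref{cdufl-thm}(ii), namely $F^S=F_{\I'}\le(2+\dt)C^*_{\I_2}$ and $C^S=C_{\I'}\le(1+\dt)C^*_{\I_2}$, giving an $\I_2$-solution of cost at most $\bigl[\frac{2+\dt}{\dt}\cdot\frac1\al+(1+\dt)\bigl(\frac1\al+\frac{2\al}{2\al-1}\bigr)\bigr]C^*_{\I_2}$. Writing $A=\frac1\al$ and $B=\frac1\al+\frac{2\al}{2\al-1}$, this equals $\bigl[A+B+\tfrac{2A}{\dt}+\dt B\bigr]C^*_{\I_2}$, which is minimized at $\dt=\sqrt{2A/B}=\sqrt{\frac{2/\al}{1/\al+(2\al)/(2\al-1)}}$ -- the stated value -- and then equals $\bigl[A+B+2\sqrt{2AB}\bigr]C^*_{\I_2}=g(\al)\,C^*_{\I_2}$ after expanding $A+B=\frac2\al+\frac{2\al}{2\al-1}$ and $2\sqrt{2AB}=2\sqrt{\frac{2}{\al^2}+\frac{4}{2\al-1}}$, which is exactly the bound of Theorem~\ref{i2thm}.
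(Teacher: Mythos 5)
Your proposal is correct and takes essentially the same approach as the paper's proof: the same per-supply-point charging of mirror moves to $C^S_i$, the same bound of $M\,l(i)\le(\text{opening cost of }i)/(\dt\al)$ on each phase-A2 tree edge (giving the $F^S/(\dt\al)$ term), the same charging of the type-(ii) and phase-A3 residual moves against $C^S_i$ via the $\al M$ and $(2\al-1)M$ lower bounds, and the same optimization over $\dt$. The phase-A3 case you defer as the technical heart is carried out in the paper with precisely the ingredients you name---$c_{i_ri_q}\le 2c_{ii_q}$, residual clients concentrated at $i_0,i_1$ with at most $(1-\al)M$ at $i_1$, and $C^S_i\ge\sum_r Y_{i_r}c_{ii_r}$---yielding the coefficient $2+\tfrac{1-\al}{\al}+\tfrac{1}{2\al-1}=\tfrac{1}{\al}+\tfrac{2\al}{2\al-1}$, which matches the claimed bound.
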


\begin{proof}
Let $S_2$ denote the solution computed for $\I_2$.
For a supply point $i$ opened in $S$, we use $C^S_i$ to denote the cost incurred in
supplying demand from $i$ to the demand points satisfied by $i$; 
so $C^S=\sum_{i\in F^S}C^S_i$. 
At various steps, we transfer clients between locations according to the assignment
in the \cdufl solution $S$, and the cost incurred in this reassignment can be charged
against the $C^S_i$s of the appropriate supply points. 
So the cost of phase A1 is $\sum_{i\in\F^c}C^S_i$, and the cost of the first step of phase
A2 is $\sum_{i\in\F^R}C^S_i$. 

As in~\cite{Svitkina08}, we can bound the remaining cost of phase A2, incurred in
transferring clients according to the tree edges by 
$F^S/\dt\al+\bigl(\sum_{i\in\F^R}C^S_i\bigr)/(2\al-1)$.
When we move clients up to the root of a component, we move strictly less than $M$ clients
along any edge $(i,i')$ in that component, 
and since $i\in\F^R$, we pay at least $\dt\al M l(i)$ opening cost for $i$.
The only unaccounted cost now is the cost incurred in step (b) of phase A2, where we have
a tree $T$ rooted at $\{r,r'\}$. Let $i'\in\F^B$ be the location nearest to $\{r,r'\}$, and
(say) $c_{i'r}\leq c_{i'r'}$. Note that we have already bounded the cost in transferring
clients to $r$, so we only need to bound the cost incurred in transferring at most $M$
clients from $r$ to $i'$. 
This is at most 
$M\cdot\frac{C^S_r+C^S_{r'}}{X_r+X_{r'}}\leq\bigl(C^S_r+C^S_{r'}\bigr)/(2\al-1)$, because
$\{r,r'\}$ send $X_r+X_{r'}=(n_r+n_{r'})-(N_r+N_{r'})\geq(2\al-1)M$ units to demand points 
in $\F^B$, all of which are at distance at least $c_{i'r}$ from $\{r,r'\}$.

Finally, consider phase A3 and some $i\in\F^G$. If $\sum_{j\in D'(i)}Y_j\leq N_i$, then
the cost incurred is at most 
$C^S_i+M\cdot\frac{C^S_i}{X_i}\leq C^S_i\bigl(1+\frac{1}{\al}\bigr)$ 
(as $X_i>N_i\geq\al M$).  
Now consider the case $\sum_{j\in D'(i)}Y_j>N_i$. For any
$i_q\in\{i_{\ell+1},\ldots,i_t\}$ and any $i_r\in\{i_0,\ldots,i_\ell\}$, we have
$c_{i_ri_q}\leq 2c_{ii_q}$, so the cost of transferring $Y_{i_q}\leq M-n_{i_q}$ clients to
each $i_q,\ q=\ell+1,\ldots,t$ is at most $2C^S_i$. Observe that 
$(t-\ell+1)M>\sum_{r=0}^t N_{i_r}$, i.e., 
$M+\sum_{q=\ell+1}^t Y_{i_r}>\sum_{r=0}^\ell N_{i_r}$, so after this reassignment, there
are less than $M$ residual clients in $i_0,\ldots,i_\ell$. By our order of transferring
clients, all these residual clients are at $i_0, i_1$ (otherwise we would have at least
$N_{i_0}+N_{i_1}\geq M$ residual clients) with at most $M-N_{i_0}\leq (1-\al)M$ of them
located at $i_1$. 
The cost of reassigning these residual clients is at most
$(1-\al)Mc_{ii_1}+Mc_{ii_{\ell+1}}\leq
(1-\al)M\cdot\frac{C^S_i}{\sum_{r=1}^t Y_{i_r}}+
M\cdot\frac{C^S_i}{\sum_{r=\ell+1}^t Y_{i_r}}$, since 
$C^S_i$ is the total cost of supplying at least $Y_{i_r}$ demand to each
$i_r,\ r=1,\ldots, t$. 
The latter expression is at most 
$C^S_i\bigl(\frac{1-\al}{\al}+\frac{1}{2\al-1}\bigr)$, since
$\sum_{r=1}^t Y_{i_r}>N_{i_0}\geq\al M$, 
$\sum_{r=\ell+1}^t Y_{i_r}>\sum_{r=0}^\ell N_{i_r}-M\geq(2\al-1)M$.)
Thus, the cost of $S_2$ is at most
$$ 
\frac{F^S}{\dt\al}+\sum_{i\in\F^c}C^S_i+\sum_{i\in\F^R}C^S_i\cdot\Bigl(1+\tfrac{1}{2\al-1}\Bigr)
+\sum_{i\in\F^G}C^S_i\cdot\max\Bigl\{1+\tfrac{1}{\al},2+\tfrac{1-\al}{\al}+\tfrac{1}{2\al-1}\Bigr\}
\leq\frac{F^S}{\dt\al}+C^S\Bigl(\tfrac{1}{\al}+\tfrac{2\al}{2\al-1}\Bigr). 
$$
So if $S$ is the solution given by part (ii) of Theorem~\ref{cdufl-thm}, the cost of $S_2$
is at most
$\bigl(\frac{2}{\dt\al}+\frac{1}{\al}+(1+\dt)(\frac{1}{\al}+\frac{2\al}{2\al-1})\bigr)
C^*_{\I_2}$, and plugging in the value of $\dt$ yields the 
$g(\al)=\frac{2}{\al}+\frac{2\al}{2\al-1}+2\sqrt{\frac{2}{\al^2}+\frac{4}{2\al-1}}$
approximation bound stated in Theorem~\ref{i2thm}.
\end{proof}

\subsection{A local-search based approximation algorithm for \cdufl} \label{cdufl-ls} 
We now describe our local-search algorithm for \cdufl, which leads to the proof of
Theorem~\ref{cdufl-thm}. 
Let $\hcF=\hcFu\cup\hcFc$ be the facility-set of the \cdufl
instance, where $\hcFu\cap\hcFc=\es$. Here, $\hcFu$ are the uncapacitated facilities
with opening costs $\{\htf_i\}$, and facilities in $\hcFc$ have (finite) capacities
$\{u_i\}$ and zero opening costs. Let $\hcD$ be the set of clients and $\hc_{ij}$ be the
cost of assigning client $j$ to facility $i$. 
The goal is to open facilities and assign clients to open facilities (respecting the
capacities) so as to minimize the sum of the facility-opening and client-assignment costs.   
We can find the best assignment of clients to open facilities by solving a network
flow problem, so we focus on determining the set of facilities to open.   

The local-search algorithm consists of three moves: $\add(i')$, $\drop(i)$, 
$\swap(i,i')$, which respectively, add a facility $i'$ not currently open, delete a
facility $i$ that is currently open, and swap facility $i$ that is open with facility $i'$
that is not open.  
We note that {\em all} previous (local-search) algorithms for \cfl that work with
non-uniform capacities use moves that are more complicated than the moves above (and
involve adding and/or deleting multiple facilities at a time).  
The algorithm repeatedly executes the best cost-improving move (if one exists) until no
such move exists. 
(As mentioned earlier, to ensure polynomial time, we only consider moves that
yield significant improvement and hence terminate at an approximate local optimum; but
this has only a marginal effect on the approximation bound.)  
We assume for simplicity that each client has unit demand. This is
without loss of generality because, even with non-unit client-demands, one can compute the
best local-search move (and hence run the algorithm), and for the purposes of analysis,
one can always treat a client with integer demand $d$ as $d$ co-located unit-demand
clients.  

\vspace{-1ex}
\paragraph{Analysis.}
Let $\hS$ denote a local-optimum returned by the algorithm, with facility-opening cost
(and set of open facilities) $\hF$ and assignment cost $\hC$. 
Let $\sol$ be an arbitrary \cdufl solution, with facility-cost (and set of open
facilities) $F^{\sol}$ and assignment cost $C^{\sol}$.
Note that we may assume that $\hcFc\sse\hF\cap F^{\sol}$.  
For a facility $i$, we use $\hcDS(i)$ and $\hcDsol(i)$ to denote respectively the
(possibly empty) set of clients served by $i$ in $\hS$ and $\sol$. For a client $j$, let
$\hC_j$ and $C^{\sol}_j$ be the assignment cost of $j$ in $\hS$ and $\sol$ respectively.

We borrow ideas from the analysis of the corresponding local-search algorithm
for \ufl in~\cite{AryaGKMMP01}, 
but the presence of capacitated facilities means that we need to reassign clients more
carefully to analyze the change in assignment cost due to a local-search move. In
particular, unlike the analysis in~\cite{AryaGKMMP01}, where upon deletion of a facility 
$s\in\hF$ 
we reassign only the clients currently assigned to $s$, in our case (as in the
analysis of local-search algorithms for \cfl), we need to perform a more ``global''
reassignment (i.e., even clients not assigned to $s$ may get reassigned) along certain
(possibly long) paths in a suitable graph. This also means that we need to construct 
a suitable mapping between paths instead of the client-mapping considered
in~\cite{AryaGKMMP01}. 

We construct a directed graph $G$ with node-set $\hcD\cup\hcF$, and arcs from $i$ 
to all clients in $\hcDS(i)$ and arcs from all clients in $\hcDsol(i)$ to $i$, for every
facility $i$.
Via standard flow-decomposition, we can decompose $G$ into a collection of (simple) 
paths $\Pc$, and cycles $\Rc$, so that (i) each facility $i$ appears as the starting point
of $\max\{0,|\hcDS(i)|-|\hcDsol(i)|\}$ paths, and the ending point of
$\max\{0,|\hcDsol(i)|-|\hcDS(i)|\}$ paths, and (ii) each client $j$ appears on a unique
path $P_j$ or on a cycle.  
Let $\Pcst{s}\sse\Pc$ and $\Pcend{o}\sse\Pc$ denote
respectively the collection of paths starting at $s$ and ending at $o$,
and $\Pc(s,o)=\Pcst{s}\cap\Pcend{o}$.
For a path $P=\{i_0,j_0,i_1,j_1,\ldots,i_k,j_k,i_{k+1}:=o\}\in\Pc$, define
$\hcD(P)=\{j_0,\ldots,j_k\}$, $\head(P)=j_0$, and $\tail(P)=j_k$. 
A {\em shift} along $P$ means that
we reassign client $j_r$ to $i_{r+1}$ for each $r=0,\ldots,k$ (opening $o$ if
necessary). Note that this is feasible, since if $o\in\hcFc$, we know that 
$|\hcDS(o)|\leq |\hcDsol(o)|-1\leq u_o-1$. Let
$\shift(P):=\sum_{j\in\hcD(P)}\bigl(C^{\sol}_j-\hC_j\bigr)$ be the increase in assignment
cost due to this reassignment, which is an upper bound on the actual increase in
assignment cost if $o$ is added to $\hF$. Also, let 
$\cost(P):=\sum_{j\in\hcD(P)}\bigl(C^\sol_j+\hC_j\bigr)$.
We define a shift along a cycle $R\in\Rc$ similarly, letting
$\shift(R):=\sum_{j\in\hcD\cap R}\bigl(C^\sol_j-\hC_j\bigr)$.
By considering a shift operation for every path
and cycle in $\Pc\cup\Rc$ (i.e., suitable $\add$ moves), we get the 
following result. 

\begin{lemma} \label{asgncost}
For every $o\in F^{\sol}$ and any $\Qc\sse\Pcend{o}$, we have 
$\sum_{P\in\Qc}\shift(P)\geq 
\begin{cases}-\htf_o& \text{if $o\in F^{\sol}\sm\hF$}, \\ 
0& \text{otherwise}.\end{cases}$
For every cycle $R\in\Rc$, we have $\shift(R)\geq 0$.
Thus, we have $\hC\leq F^{\sol}+C^{\sol}$.  
\end{lemma}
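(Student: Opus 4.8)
The plan is to prove the two families of inequalities (one for paths, one for cycles) from the local-optimality of $\hS$, and then add them up using the fact that every client lies on exactly one path or cycle, so that $\sum_{P\in\Pc}\shift(P)+\sum_{R\in\Rc}\shift(R)=\sum_j\bigl(C^{\sol}_j-\hC_j\bigr)=C^{\sol}-\hC$.

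The crux, which I expect to be the main obstacle, is establishing that a shift along \emph{any} $\Qc\sse\Pcend{o}$ is a \emph{feasible} reassignment (this is precisely the point at which the naive single-client reassignment of the \ufl analysis breaks, forcing the global path-based shift). The key bookkeeping is that along a single path each interior facility $i_r$ loses exactly its $\hS$-client $j_r$ and gains exactly $j_{r-1}$, so its load is unchanged; the starting facility only loses a client; and only the endpoint $o$ gains load. Since every path in $\Qc$ ends at $o$, any facility $i\neq o$ is never an endpoint of a path in $\Qc$, so its net load change across all of $\Qc$ is $\leq 0$; hence its final load is at most $|\hcDS(i)|\leq u_i$ and capacities at $i$ are respected. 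For $o$ itself: if $o\in\hcFu$ there is no capacity constraint, and if $o\in\hcFc$ (so $o\in\hF\cap F^{\sol}$) the number of paths ending at $o$ is exactly $|\hcDsol(o)|-|\hcDS(o)|$, so after shifting $o$ serves at most $|\hcDsol(o)|\leq u_o$ clients. I would prove this feasibility claim carefully first, since everything downstream depends on it.

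Given feasibility, I would split into the two cases of the lemma. If $o\in F^{\sol}\sm\hF$, then $o\in\hcFu$ (as $\hcFc\sse\hF$), the move $\add(o)$ is available, and the shifted assignment is feasible for $\hF\cup\{o\}$; since $\shift(P)$ upper-bounds the induced increase in assignment cost and $\hS$ is $\add$-locally-optimal, $0\leq\htf_o+\sum_{P\in\Qc}\shift(P)$, i.e.\ $\sum_{P\in\Qc}\shift(P)\geq-\htf_o$. If $o\in F^{\sol}\cap\hF$, then every facility on a path of $\Qc$ already lies in $\hF$ (each has an outgoing arc to an $\hS$-client), so the shift is a feasible reassignment for the current open set; since the algorithm always assigns clients optimally given the open facilities, $\hC$ is the optimal assignment cost for $\hF$, whence $\hC+\sum_{P\in\Qc}\shift(P)\geq\hC$, giving $\sum_{P\in\Qc}\shift(P)\geq 0$. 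The cycle case is analogous: a shift along $R$ leaves every facility's load unchanged (each node loses one client and gains one) and uses only facilities in $\hF$, so optimality of the assignment yields $\shift(R)\geq 0$.

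Finally I would assemble the bound. Applying the path inequality with $\Qc=\Pcend{o}$ for each $o\in F^{\sol}$ and summing over $o$ gives $\sum_{P\in\Pc}\shift(P)\geq-\sum_{o\in F^{\sol}\sm\hF}\htf_o\geq-F^{\sol}$, while the cycle inequalities give $\sum_{R\in\Rc}\shift(R)\geq 0$. Combining with the identity $\sum_{P\in\Pc}\shift(P)+\sum_{R\in\Rc}\shift(R)=C^{\sol}-\hC$ yields $C^{\sol}-\hC\geq-F^{\sol}$, which rearranges to $\hC\leq F^{\sol}+C^{\sol}$, as claimed.
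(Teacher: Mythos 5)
Your proof is correct and takes essentially the same route as the paper, which disposes of this lemma in a single sentence (``by considering a shift operation for every path and cycle in $\Pc\cup\Rc$, i.e., suitable $\add$ moves'') --- exactly your case analysis of $\add(o)$ for $o\in F^{\sol}\sm\hF$ versus optimality of the current assignment for $o\in\hF\cap F^{\sol}$ and for cycles, followed by summing over $o\in F^{\sol}$ and using the partition of clients into paths and cycles. Your explicit verification that shifting simultaneously along all paths of $\Qc$ is capacity-feasible (net load change at most $0$ at every facility other than $o$, and final load at most $|\hcDsol(o)|\leq u_o$ at $o$) fills in a detail that the paper records only for a single path.
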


\paragraph{Bounding the opening cost of facilities in \boldmath $\hF\sm F^{\sol}$.} For
this, we only need paths that start at facility in $\hF\sm F^{\sol}$. Note that all
facilities in $(\hF\sm F^{\sol})\cup(F^{\sol}\sm\hF)$ are  
{\em uncapacitated}. To avoid excessive notation, 
for a facility $o\in F^{\sol}\sm\hF$, we now use $\Pcend{o}$ to refer to the collection of
paths ending in $o$ that start in $\hF\sm F^{\sol}$. 
(As before, $\Pc(s,o)$ is the set of paths that start at $s$ and end at $o$.) 
For any $o\in F^{\sol}\sm\hF$, we can obtain a 1-1 mapping $\pi:\Pcend{o}\mapsto\Pcend{o}$
such that if $P\in\Pc(s,o),\ s\in\hF\sm F^{\sol}$ and $\pi(P)=P'\in\Pc(s',o)$, then 
(i) if $|\Pc(s,o)|\leq\frac{|\Pcend{o}|}{2}$, we have $s\neq s'$; 
(ii) if $s=s'$, then $P=P'$; and
(iii) $\pi(P')=P$.
Say that $o\in F^{\sol}\sm\hF$ is {\em captured} by $s$ if
$|\Pc(s,o)|>\frac{|\Pcend{o}|}{2}$. Note that $o$ is captured by at most one facility in
$\hF$. Call a facility in $\hF\sm F^{\sol}$ {\em good} 
if it does not capture any facility, and {\em bad} otherwise. 

\begin{lemma} \label{goodf}
For any good facility $s$, we have 
\begin{equation}
\htf_s\leq\sum_{P\in\Pcst{s}}\shift(P)+
\sum_{o\notin\hF,P\in\Pc(s,o)}\cost\bigl(\pi(P)\bigr). \label{goodf-ineq}
\end{equation} 
\end{lemma}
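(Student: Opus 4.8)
The plan is to invoke local optimality of $\hS$ with respect to the single move $\drop(s)$. Since $s$ is open in $\hS$, executing $\drop(s)$ cannot decrease the total cost: it saves the opening cost $\htf_s$ but forces us to reassign the clients of $\hcDS(s)$ to the remaining open facilities. Because $s\notin F^{\sol}$ we have $\hcDsol(s)=\es$, so $s$ starts exactly $|\hcDS(s)|$ paths and the heads $\{\head(P):P\in\Pcst{s}\}$ are precisely the clients of $\hcDS(s)$ that must be rehomed. Hence it suffices to exhibit one feasible reassignment of these clients whose assignment-cost increase is at most the right-hand side of \eqref{goodf-ineq}; local optimality then gives $\htf_s\le\sum_{P\in\Pcst{s}}\shift(P)+\sum_{o\notin\hF,\,P\in\Pc(s,o)}\cost(\pi(P))$ directly.

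First I would dispose of the paths $P\in\Pcst{s}$ ending at a facility $o\in\hF$. Such $o$ stays open after dropping $s$ (and $o\neq s$, as no path can end at $s$), so a $\shift(P)$ is feasible — every internal facility keeps its load, and when $o\in\hcFc$ the slack $|\hcDS(o)|\le|\hcDsol(o)|-1$ guarantees room — and it raises the assignment cost by exactly $\shift(P)$. The crux is the remaining paths $P\in\Pc(s,o)$ with $o\notin\hF$, where a shift would illegally open $o$. Here I use that $s$ is \emph{good}: it captures no facility, so $|\Pc(s,o)|\le\tfrac{1}{2}|\Pcend{o}|$ for every $o$, and property (i) of $\pi$ forces the partner $\pi(P)=P'\in\Pc(s',o)$ to begin at a facility $s'\neq s$. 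Since $s'\in\hF\sm F^{\sol}$ is uncapacitated and remains open, I route the head of $P$ forward as in a shift but \emph{bounce the last client off $s'$ rather than opening $o$}: writing $P=\{s,j_0,i_1,\ldots,i_k,j_k,o\}$, I reassign $j_r\mapsto i_{r+1}$ for $r<k$ and send the displaced $j_k$ to $s'$. Telescoping the triangle inequality along $P'$ from $o$ back to $s'$ yields $\hc_{j_k s'}\le C^{\sol}_{j_k}+\sum_{j\in\hcD(P')}\bigl(C^{\sol}_j+\hC_j\bigr)=C^{\sol}_{j_k}+\cost(P')$, so the increase charged to $P$ is at most $\shift(P)+\cost(\pi(P))$.

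Finally I would note that all these reassignments can be performed simultaneously: distinct paths are client-disjoint, every internal load is preserved, each bounce lands on an uncapacitated open $s'$, and the clients of the partner paths $P'$ are never moved (the partner is used only to bound a single distance). Summing the per-path bounds over $\Pcst{s}$ and applying local optimality of $\drop(s)$ gives \eqref{goodf-ineq}. The main obstacle is exactly the bounce step for $o\notin\hF$: one must check that the partner start $s'$ is distinct from $s$ (this is precisely where goodness and property (i) enter), that it is open and uncapacitated, and that the telescoping bound along $P'$ produces the $\cost(\pi(P))$ term without double-counting — consistency guaranteed by the involution $\pi(\pi(P))=P$ together with condition (ii).
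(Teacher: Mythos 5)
Your proposal is correct and follows essentially the same route as the paper's proof: invoke local optimality under $\drop(s)$, shift along paths ending at facilities of $\hF$, and for a path $P\in\Pc(s,o)$ with $o\notin\hF$ reroute $\tail(P)$ to the start $s'$ of the partner path $\pi(P)$ (with goodness of $s$ and property (i) of $\pi$ guaranteeing $s'\neq s$), bounding the detour by $C^{\sol}_{\tail(P)}+\cost(\pi(P))$ via the triangle inequality telescoped along $\pi(P)$. Your added remarks on simultaneous feasibility of the reassignments and on not moving the partner paths' clients are implicit in the paper's argument and are handled correctly.
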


\begin{proof}
Consider the move $\drop(s)$. We upper bound the increase in
reassignment cost as follows. Consider $j\in\hcDS(s)$, and let $P_j\in\Pc(s,o)$. (Recall
that $P_j$ is the unique path containing $j$.)
If $o\in\hF\cap F^{\sol}$, then we perform a shift along $P_j$. Otherwise, let
$\pi(P_j)\in\Pc(s',o)$, where $s'\neq s$. We reassign all clients on $P_j$ except
$\tail(P_j)$ as in the shift operation, and reassign $\tail(P_j)$ to $s'$.
Let $k=\tail(P_j)$. 
Since $c_{s'k}\leq c_{s'o}+C^{\sol}_{k}\leq\cost\bigl(\pi(P_j)\bigr)+C^{\sol}_{k}$, 
the increase in cost by reassigning clients on $P_j$ this way is at most
$\cost\bigl(\pi(P_j)\bigr)+C^{\sol}_{k}-\hC_{k}+
\sum_{j'\in\hcD(P_j)\sm\{k\}}\bigl(C^{\sol}_{j'}-\hC_{j'}\bigr)$.
Thus, the actual increase in cost due to this move, which should be nonnegative,  
is at most
$$
-\htf_s+\sum_{o\in\hF, P\in\Pc(s,o)}\shift(P)+
\sum_{o\notin\hF,P\in\Pc(s,o)}\Bigl[\shift(P)+\cost\bigl(\pi(P)\bigr)\Bigr].
\\[-20pt]
$$ 
\end{proof}

Now consider a bad facility $s$. 
Let $\capt_s\sse F^{\sol}\sm\hF$ be the facilities captured by $s$, and let
$o_s\in\capt_s$ be the facility nearest to $s$.  

\begin{lemma} \label{badf}
For any bad facility $s$, we have
\begin{equation}
\htf_s\leq\sum_{o\in\capt_s}\htf_o+\sum_{P\in\Pcst{s}}\shift(P)+
\sum_{\substack{o\notin\hF \\ P\in\Pc(s,o):\pi(P)\neq P}}\cost\bigl(\pi(P)\bigr)+
\sum_{\substack{o\in\capt_s\sm\{o_s\}\\ P\in\Pc(s,o):\pi(P)=P}}\cost(P).
\label{badf-ineq}
\end{equation}
\end{lemma}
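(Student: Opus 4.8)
The plan is to derive the inequality from the local optimality of a single swap move. Since $s$ is bad it captures a nonempty set $\capt_s\sse F^{\sol}\sm\hF$ of facilities, all of which are uncapacitated; let $o_s\in\capt_s$ be the one nearest to $s$. I would consider the move $\swap(s,o_s)$, which closes $s$ and opens $o_s$. Local optimality gives $0\le(\htf_{o_s}-\htf_s)+\Dt$ for any valid upper bound $\Dt$ on the induced increase in assignment cost, so $\htf_s\le\htf_{o_s}+\Dt\le\sum_{o\in\capt_s}\htf_o+\Dt$, where the remaining opening costs $\htf_o$ ($o\ne o_s$) are nonnegative slack (each captured facility is captured by at most one $s$, so these will be charged once to $F^{\sol}$ when the lemma is later summed over all bad facilities). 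It thus suffices to produce a feasible reassignment of the clients of $s$ whose cost increase $\Dt$ is at most the three path-sums on the right-hand side.

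Because $s\in\hF\sm F^{\sol}$ we have $\hcDsol(s)=\es$, so $s$ has in-degree $0$ in $G$ and every client of $\hcDS(s)$ is the head of a unique path in $\Pcst{s}$; reassigning these heads, via shifts along their paths, is all that is needed. For each $P\in\Pc(s,o)$ I would reassign exactly as in Lemma~\ref{goodf}, treating the captured endpoints specially. If $o\in\hF$ or $o=o_s$, perform $\shift(P)$ — feasible since $o$, or the newly opened $o_s$, is open — and charge $\shift(P)$. If $o\in F^{\sol}\sm\hF$, $o\ne o_s$, and $\pi(P)=P'\ne P$ with $P'\in\Pc(s',o)$, $s'\ne s$, then shift all of $P$ except $\tail(P)$ and send $\tail(P)$ to the open facility $s'$; verbatim from Lemma~\ref{goodf} this costs at most $\shift(P)+\cost(\pi(P))$. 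By properties (i)--(ii) of $\pi$, a path $P\in\Pc(s,o)$ can be a fixed point only when $s$ captures $o$, i.e. $o\in\capt_s$; the fixed points ending at $o_s$ are already covered by the first case, so the only paths left to handle are fixed points $P\in\Pc(s,o)$ with $o\in\capt_s\sm\{o_s\}$.

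These fixed points are the genuinely new situation and the main obstacle. Here the endpoint $o$ stays closed (so a plain shift is infeasible) and there is no partner path from a different start (so the $\pi$-reassignment of Lemma~\ref{goodf} is unavailable); hence $\tail(P)$ must be routed to the single newly opened facility $o_s$. I would shift $P$ up to its penultimate facility and reassign $k:=\tail(P)$ to $o_s$, so that the per-path increase telescopes to $\shift(P)-C^{\sol}_k+c_{o_s k}$. The step then succeeds precisely when $c_{o_s k}\le C^{\sol}_k+\cost(P)$, and obtaining this clean bound — rather than the factor-$2$ estimate produced by the naive $c_{o_s k}\le c_{o_s o}+c_{ok}\le 2c_{so}+C^{\sol}_k$ (using $c_{so}\le\cost(P)$) — is the crux of the argument. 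This is exactly where the choice of $o_s$ as the \emph{nearest} captured facility, giving $c_{so_s}\le c_{so}$, is indispensable, and where the opening-cost slack $\sum_{o\in\capt_s\sm\{o_s\}}\htf_o$ in the statement is available to absorb the rerouting of these fixed-point tails.

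Granting the per-path charge of $\shift(P)+\cost(P)$ for each fixed point ending at some $o\in\capt_s\sm\{o_s\}$, the accounting assembles as follows: every $P\in\Pcst{s}$ contributes its $\shift(P)$ (first sum); each non-fixed path ending at a closed $o$ contributes $\cost(\pi(P))$ (second sum); and each fixed point ending at a captured $o\ne o_s$ contributes $\cost(P)$ (third sum). Summing these charges over all $P\in\Pcst{s}$ bounds $\Dt$ by the three path-sums, and combined with $\htf_s\le\sum_{o\in\capt_s}\htf_o+\Dt$ this yields inequality~(\ref{badf-ineq}). The one delicate point to get right is the telescoping for the fixed-point paths, namely controlling $c_{o_s k}$ by $C^{\sol}_k+\cost(P)$ without an extra factor; everything else is a routine adaptation of the good-facility reassignment in Lemma~\ref{goodf}.
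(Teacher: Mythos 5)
Your setup---invoking local optimality of $\swap(s,o_s)$, shifting along paths ending in $\hF$, and rerouting $\tail(P)$ to $s'$ for non-fixed paths exactly as in Lemma~\ref{goodf}---matches the paper, and you correctly isolate the fixed-point paths $P\in\Pc(s,o)$ with $o\in\capt_s\sm\{o_s\}$ as the crux. But precisely there your proof has a genuine gap: your plan is to shift along $P$ and send $k=\tail(P)$ to $o_s$, which requires $c_{o_sk}\leq C^{\sol}_k+\cost(P)$, and you explicitly ``grant'' this bound rather than prove it. It is in fact false in general. All one can say is $c_{o_sk}\leq c_{o_so}+C^{\sol}_k$ with $c_{o_so}\leq c_{o_ss}+c_{so}\leq 2c_{so}\leq 2\cost(P)$, and this factor of $2$ is unavoidable for a tail reassignment: take $o_s$ and $o$ collinear with $s$ on opposite sides at equal distance, let $P$ be a nearly straight path with $\cost(P)\approx c_{so}$ containing many clients, and let $k$ lie close to $o$; then $c_{o_sk}\approx 2\cost(P)$ while $C^{\sol}_k+\cost(P)\approx\cost(P)$. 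So routing the tail to $o_s$ only yields a per-path charge of $\shift(P)+2\cost(P)$, which does not give \eqref{badf-ineq}. The nearest-choice of $o_s$ cannot rescue this, because the tail is far from $s$; nearness of $o_s$ to $s$ only helps points that are themselves near $s$.

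The paper handles the fixed points in the opposite way: it reassigns the \emph{head} $j=\head(P)$---a client of $s$, hence at distance $\hC_j$ from $s$---directly to $o_s$, leaving the rest of $P$ untouched (all its facilities remain open), for an increase of $c_{o_sj}-\hC_j\leq c_{so_s}\leq c_{so}\leq\cost(P)$; this is where the nearest-captured choice of $o_s$ is actually used. However, this makes the $\shift(P)$ terms for fixed-point paths disappear from the swap inequality, and they cannot simply be omitted from \eqref{badf-ineq}, since $\shift(P)$ may be negative; likewise, your device of adding the opening costs $\htf_o$, $o\in\capt_s\sm\{o_s\}$, as ``nonnegative slack'' supplies the $\htf_o$ terms but not these shifts. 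The paper recovers both at once by additionally invoking Lemma~\ref{asgncost} for the moves $\add(o)$, $o\in\capt_s\sm\{o_s\}$, with $\Qc=\{P\in\Pc(s,o):\pi(P)=P\}$, which yields $0\leq\htf_o+\sum_{P\in\Qc}\shift(P)$; adding these inequalities to the swap inequality gives exactly \eqref{badf-ineq}. So the two missing ideas are: reassign heads, not tails, to $o_s$, and use auxiliary add moves to inject the fixed-point $\shift$ terms and opening costs.
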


\begin{proof}
Consider the move $\swap(s,o_s)$. We reassign client $j\in\hcDS(s)$ as follows. Let
$P_j\in\Pc(s,o)$. 
\begin{list}{$\bullet$}{\usecounter{enumi} \itemsep=0ex \addtolength{\leftmargin}{-2ex}}
\item If $o\in\hF\cap F^{\sol}$, or $o=o_s$ and $\pi(P_j)=P_j$, we perform a
shift along $P_j$. The increase in assignment cost is at most $\shift(P_j)$.

Otherwise, let $\pi(P_j)\in\Pc(s',o)$. 

\item If $\pi(P_j)\neq P_j$ (so $s'\neq s$), we reassign $\hcD(P_j)\sm\{\tail(P_j)\}$ as
in the shift operation, and assign $\tail(P_j)$ to $s'$. 
As in the proof of Lemma~\ref{goodf}, the increase in assignment cost is at most
$\shift(P_j)+\cost\bigl(\pi(P_j)\bigr)$. 

\item If $\pi(P_j)=P_j$ (so $o\neq o_s$), we assign $j$ to $o_s$. Note that
$c_{o_sj}\leq\hC_j+c_{so_s}\leq\hC_j+c_{so}\leq\hC_j+\cost(P_j)$, so the increase in
assignment cost is at most $\cost(P_j)$.
\end{list}
This gives the inequality
\begin{equation}
\begin{split}
0 \leq \htf_{o_s}-\htf_s & +
\sum_{\substack{P\in\Pc(s,o): o\in\hF\text{ or}\\ o=o_s,\ \pi(P)=P}}\shift(P)+
\sum_{o\notin\hF}\sum_{P\in\Pc(s,o): \pi(P)\neq P}\Bigl[\shift(P)+\cost\bigl(\pi(P)\bigr)\Bigr] \\
& + \sum_{o\notin\hF: o\neq o_s}\sum_{P\in\Pc(s,o): \pi(P)=P}\cost(P). \label{swap-ineq}
\end{split}
\end{equation}
Now consider the operation $\add(o)$ for all $o\in\capt_s\sm\{o_s\}$, and 
apply Lemma~\ref{asgncost} taking $\Qc=\{P\in\Pc(s,o): \pi(P)=P\}$.
This yields the inequality
$0\leq \htf_o+\sum_{P\in\Pc(s,o):\pi(P)=P}\shift(P)$ for each $o\in\capt(s)\sm\{o_s\}$.
Adding these inequalities to \eqref{swap-ineq}, and rearranging proves the lemma.
\end{proof}

\begin{proofof}{Theorem~\ref{cdufl-thm}}
We prove part (i); part (ii) follows directly from part (i) and Lemma~\ref{cdufl-cost}.
Lemma~\ref{asgncost} bounds $\hC$. 
Consider adding \eqref{goodf-ineq} for all good facilities and \eqref{badf-ineq} for all
bad facilities, and the vacuous equality $\htf_i=\htf_i$ for all $i\in\hF\cap F^{\sol}$.
The LHS of the resulting inequality is precisely $\hF$. 
The $\htf_i$s on the RHS add up to give at most $F^{\sol}$. 
We claim that each path $P\in\bigcup_{s\in\hF\sm F^{\sol}}\Pcst{s}$ contributes at most
$\shift(P)+\cost(P)=2\sum_{j\in\hcD(P)}C^{\sol}_j$ to the RHS. Thus the RHS is at most
$F^{\sol}+2C^{\sol}$, and we obtain that $\hF\leq F^{\sol}+2C^{\sol}$.

Each path $P$ in $\bigcup_{s\notin F^{\sol},o\in\hF}\Pc(s,o)$ appears exactly once,
either in \eqref{goodf-ineq} or in \eqref{badf-ineq}, and contributes $\shift(P)$. 
Now consider a path $P\in\bigcup_{s\notin F^{\sol},o\notin\hF}\Pc(s,o)$, and let
$\pi(P)=P'\in\Pc(s',o)$. Note that $\pi(P')=P$. 
If $P'\neq P$, then $P$ appears twice in our inequality-system: once in the inequality
for $s$ contributing $\shift(P)$ (due to $P$), and once in the inequality for $s'$
contributing $\cost(P)$ (due to $P'$). 
If $P'=P$, then $s=s'$ and $s$ is a bad facility; now $P$ appears only in
\eqref{badf-ineq} (for $s$) and contributes either $\shift(P)$ if $o=o_s$, or
$\shift(P)+\cost(P)$ otherwise. 
\end{proofof}

\begin{labeledthm}{Corollary of Theorem~\ref{cdufl-thm}:}
There is a $\bigl(1+\sqrt{2}\bigr)$-approximation algorithm for \cdufl.
\end{labeledthm}

\begin{proof}
We take $\sol$ in part (i) of Theorem~\ref{cdufl-thm} to be an optimum solution (with cost
$F^{\opt}+C^{\opt}$) to the instance, and scale the facility costs by $\sg$ before running
the local-search algorithm. 
The solution returned has cost 
$F+C\leq \bigl(F^{\opt}+\frac{2}{\sg}\cdot C^{\opt}\bigr)+\bigl(\sg F^{\opt}+C^{\opt}\bigr)$.
Setting $\sg=\sqrt{2}$ yields the result.
\end{proof}

\appendix

\section{Integrality-gap example for the natural LP-relaxation for \cdufl}
\label{cduflgap} 
Let $\bigl(\hcF=\hcFu\cup\hcFc,\hcD,\{\htf_i\},\{u_i\},\{\hc_{ij}\}\bigr)$ be a \cdufl  
instance with facility-set $\hcF$ (where $u_i=\infty$ for all $i\in\hcFu$, and $\htf_i=0$
for all $i\in\hcFc$), and client-set $\hcD$.   
We consider the following LP-relaxation. We use $i$ to index facilities, and $j$ to index
clients. Note that we may assume that all facilities in $\hcFc$ are open.
\begin{alignat*}{3}
\min & \quad & \sum_{i\in\hcFu}\htf_iy_i & +\sum_{j,i}\hc_{ij}x_{ij} \tag{LP}
\label{cdufllp} \\ 
\text{s.t.} && \sum_i x_{ij} & \geq 1 \qquad && \frall j \\[-2ex]
&& x_{ij} & \leq y_i && \frall i\in\hcFu, j \\
&& \sum_j x_{ij} & \leq u_i && \frall i\in\hcFc \\[-1ex]
&& x_{ij},y_i & \geq 0 && \frall i, j.
\end{alignat*} 
Here $y_i$ denotes if facility $i$ is open, and $x_{ij}$ denotes if client $j$ is assigned
to facility $i$. (We assume that each client has unit demand.) 

Now consider the following simple \cdufl instance. We have two facilities $i$ and $i'$,
and $u+1$ clients, all present at the same location. 
Facility $i$ is uncapacitated and has opening cost $f$, and facility
$i'$ has capacity $u$ (and zero opening cost). 
Any solution to \cdufl must open facility $i$ and therefore incur cost at least
$f$. However, there is a feasible solution to \eqref{cdufllp} of cost $\frac{f}{u+1}$: we
set $y_i=\frac{1}{u+1}$, and $x_{ij}=\frac{1}{u+1},\ x_{i'j}=\frac{u}{u+1}$. Thus, the
integrality gap of \eqref{cdufllp} is at least $u+1$.

\section{The locality gap of a local-search algorithm for LBFL}
\label{app:LBFLlocgap}
We show that the local-search algorithm based on $\add$, $\drop$, and $\swap$ moves---that
is, adding/dropping one facility (with $\add$ permitted only if it preserves feasibility),
or deleting one facility and adding another---has a bad {\em locality gap}, which is the
maximum ratio between the cost of a locally-optimal solution and that of an (globally)
optimal solution. Consider the \lbfl instance shown below with 
facility-set $\F =\{o,s_1,s_2,\ldots,s_M\}$, and client-set 
$\D=\D_1\cup\D_2\cup\ldots\cup\D_M$, where the $\D_i$s are disjoint sets of size $M$. The 
facility-opening costs are as follows: $f_o = M^2+\epsilon$ and $f_{s_i} = M $ for each
$i\in\{1,2,\ldots,M\}$. For each $i=1,\ldots,m$ and each client $j\in\D_i$, we have
$c_{oj}=1,\ c_{s_ij}=M$. All other distances are defined by taking the metric completion
with respect to these $c_{ij}$s. 
One can verify that the solution $S$ which opens the facilities $\{s_1,s_2,\ldots,s_M\}$
is a local optimum. 
The cost of this solution is $M^2+M^3$. However, the optimal solution opens facility
$\{o\}$, and incurs a total cost of $2M^2+\epsilon$. Thus, the locality gap is at least
$M/2$.

\begin{figure}[ht!]
\centerline{\resizebox{!}{2in}{\input{locgapfp.pstex_t}}}
\label{locgapfp}
\end{figure}

We can modify this example to show that the locality gap remains bad, even if
aim for a bicriteria solution and consider an $\add$ move to be permissible if every open 
facility can be assigned at least $\al M$ clients. The only change is that each set $\D_i$
now has $\al M$ clients: $S$ is still a local optimum, and the locality gap is therefore
at least $\al M/2$.

\paragraph{Bad example with zero facility-opening costs.}
Even in the setting where all facilities have zero opening cost (as in the $\I_2$
instance), we can construct bad examples for local-search based on $\add$, $\drop$, and 
$\swap$ moves. 
For simplicity, first suppose that $M=2$. 
Consider a cycle with $4k$ nodes, which are labeled 
$o_0, j_0, s_0, j_1, o_1, j_2, s_1, j_3, \ldots, o_r, j_{2r}, s_r, j_{2r+1}, \ldots, 
o_{k-1}, j_{2k-2}, s_{k-1}, j_{2k-1}, o_0$. 
We have $2k$ facilities $\F =\{o_0,\ldots,o_{k-1},s_0,\ldots,s_{k-1}\}$, and $2k$ clients 
$\D=\{j_0,j_1,\ldots,j_{2k-1}\}$ (see Fig.~\ref{locgapf0}). We define the following
distances. 
\begin {itemize}
\item $c_{o_i j_{2i\mod 2k}} = c_{o_i j_{(2i-1)\mod 2k}}=1$ for all $i=0,\ldots k-1$.
\item $c_{s_i j_{2i}} = c_{s_i j_{(2i+1)}}=k-\epsilon$ for all $i=0,\ldots,k-1$.
\end{itemize}
All other distances are defined by taking the metric completion with respect to these
$c_{ij}$s. 

\begin{figure}[ht!]
\centerline{\resizebox{!}{3in}{\input{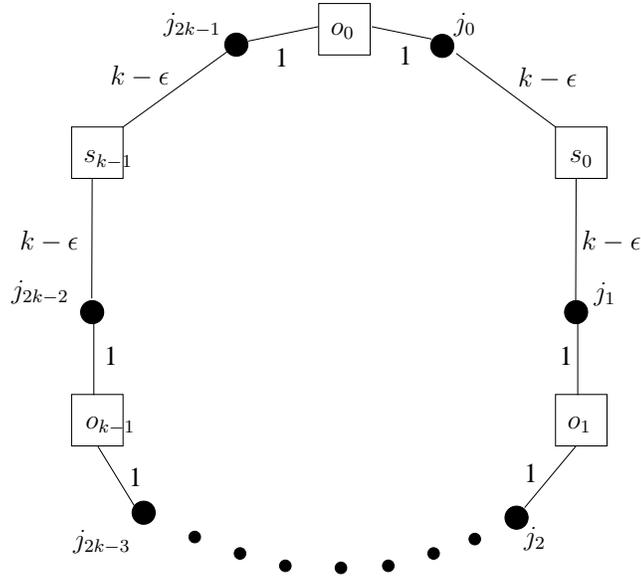}}}
\caption{Bad locality-gap example with 0 facility costs}
\label{locgapf0}
\end{figure}

The solution $S$ which opens facilities $\{s_0,s_1,\ldots,s_{k-1}\}$ is a local optimum:
no $\add$ move is feasible, and it is easy to see that no $\drop$ move improves the
cost. Consider a swap move, which we may assume is of the form $\swap(s_r,o_0)$ by
symmetry. The new client-assignment will not necessarily assign the clients $j_{2r}$ and
$j_{2r+1}$ (which were previously assigned to $s_r$) to $o_0$. However, the intuition is
that the long cycle 
will lead to a large increase in assignment cost. 
The optimal way of reassigning clients is to assign $j_{2k-1}, j_0$ to $o_o$, assign
$j_{2i+1}, j_{2i+2}$ to $s_i$ for $i\in\{0,\ldots,r-1\}$ (which is empty if $r=0$), and
assign $j_{2i}, j_{2i-1}$ to $s_i$ for $i\in\{r+1,\ldots,k-1\}$ (which is empty if
$r=k-1$). The cost increase due to this reassignment is $2(1-k+\e)+(k-1)\cdot 2>0$. 
Thus, $S$ is a local optimum.

The cost of $S$ is $2k(k-\epsilon)$. However, the optimal solution opens facilities
$\{o_0,\ldots,o_{k-1}\}$, and has a total cost of $2k$. So this instance shows a locality
gap of $k$, and since $k$ can be made arbitrarily large, this shows an unbounded locality
gap.  

\medskip
The above example can be extended to all values of $M$. For each $M$, let
$G^M$ be an $M$-regular bipartite graph with vertex set 
$V=\{o_1,o_2,...,o_\ell\}\cup\{s_1,s_2,...,s_\ell\}$ with a large girth $T$. We use $G^M$
to construct the following \lbfl instance. 
The set of facilities is $\{o_1,\ldots,o_\ell,s_1,\ldots,s_\ell\}$. 
For each edge $(s_n,o_m)$ in $G^M$, we create a client $j_{nm}$ with 
$c_{s_n j_{nm}}=T-\epsilon$ and $c_{o_m j_{nm}}=1$. 
As before, one can argue that the solution $S$ that opens facilities
$\{s_1,s_2,\ldots,s_\ell\}$ is a local optimum.  
The cost of this solution is $\ell M(T-\epsilon)$, whereas the solution that opens
facilities  $\{o_1,\ldots,o_\ell\}$ has total cost of $\ell M $. So the locality gap is
$T$, which can be made arbitrarily large.     

\end{document}